\newlength{\dhatheight}
\newcommand{\ubar}[1]{\underaccent{\bar}{#1}}
\newcommand{\cI}{\mathcal{I}}
\newcommand{\bt}{\mathbf{t}}
\providecommand{\U}[1]{\protect\rule{.1in}{.1in}}
\newtheorem{thm}{Theorem}
\newtheorem{example}{Example}
\newtheorem{lemma}{Lemma}
\newtheorem{prop}{ Proposition}
\newtheorem{definition}{ Definition}
\newtheorem{cl}{Claim}
\newtheorem*{rem*}{Remark}
\newtheorem*{prop*}{ Proposition}
\renewenvironment{abstract}
{\small
	\begin{center}
		\bfseries \abstractname\vspace{-.1em}\vspace{0pt}
	\end{center}
	\list{}{
		\setlength{\leftmargin}{1cm}%
		\setlength{\rightmargin}{\leftmargin}%
	}%
	\item\relax}
{\endlist}
\let\OLDthebibliography\thebibliography
\renewcommand\thebibliography[1]{
	\OLDthebibliography{#1}
	\setlength{\parskip}{5pt}
	\setlength{\itemsep}{2pt plus 0.4ex}
}
\title{
	\Large 
	Information Design in Allocation with Costly Verification\thanks{We would like to thank Teddy Mekonnen, Satoru Takahashi, Rakesh Vohra, organizers and anonymous referees from WINE 2022 conference, and participants of various seminars and conferences for helpful comments.}
\vspace{-3mm}}
\author{\vspace{-3mm} Yi-Chun Chen\thanks{Department of Economics and Risk Management Institute, National University of Singapore, Singapore. ecsycc@nus.edu.sg.} 
	\and 
	Gaoji Hu\thanks{School of Economics, Shanghai University of Finance and Economics, and the Key Laboratory of Mathematical Economics (SUFE), Ministry of Education, Shanghai, China. hugaoji@sufe.edu.cn} 
	\and
	Xiangqian Yang\thanks{Department of Economics, Hunan University, Hunan, China. yangxiangqian@hnu.edu.cn}
} 
\date{ October 12, 2022 
}
\begin{document}
	\maketitle
	
	\vspace{-8mm}
	
	\begin{abstract}
		
		\begin{spacing}{1.35}
			
		A principal who values an object allocates it to one or more agents. Agents learn private information (signals) from an information designer about the allocation payoff to the principal. Monetary transfer is not available but the principal can costly verify agents' private signals. The information designer can influence the agents’ signal distributions, based upon which the principal maximizes the allocation surplus. An agent’s utility is simply the probability of obtaining the good.
		With a single agent, we characterize (i) the agent-optimal information, (ii) the principal-worst information, and (iii) the principal-optimal information. Even though the objectives of the principal and the agent are not directly comparable, we find that any agent-optimal information is principal-worst.
		Moreover, there exists a robust mechanism that achieves the principal's payoff under (ii), which is therefore an optimal robust mechanism. Many of our results extend to the multiple-agent case; if not, we provide counterexamples. (147 words)
		
	\end{spacing}

		
		
	\end{abstract}

	\vspace{-5mm}
	
	\textbf{JEL Classification}: D61, D82, D83
	
	\textbf{Keywords}: information design; mechanism design; costly verification; robust mechanism design.


	\renewcommand*{\thefootnote}{\arabic{footnote}}

	\parskip=1.5pt
	\setlength{\abovedisplayskip}{2.5pt}
	\setlength{\belowdisplayskip}{2.5pt}
	
	\setlist[itemize]{itemsep = -.5ex, topsep= -.5ex}
	\setlist[enumerate]{itemsep = -.5ex, topsep= -.5ex}
	
	\newpage
	\begin{spacing}{1.5}
		\tableofcontents
	\end{spacing}
	
	\newpage
	
	\section{Introduction}
	
	\label{sec:intro}
	
	%
	
	This paper studies information design in the context of allocation with costly verification \`{a} la \cite{ben2014optimal}. Particularly, a principal who values an object allocates it to one or more agents. Agents learn private information (signals) from an information designer about the allocation payoff to the principal. Monetary transfer is not available but the principal can costly verify agents' private signals. The information designer can influence the agents’ signal distributions, based upon which the principal maximizes the allocation surplus. An agent’s utility is simply the probability of obtaining the good.

	One can motivate agent-optimal information design and principal-optimal information design, together or separately, in different applications. Here are some examples from \cite{ben2014optimal} where we can fit in an information designer: 
	A venture capital firm may need to choose which of a set of competing startups to fund; its decision relies on a rating/auditing/marketing/consulting firm which provide information with different precision levels. A government may need to choose in which town to locate a new hospital; both the government and the people in town rely on information possessed by the town council. A funding agency may have a grant to allocate among several competing researchers; an external grant referee provides information that either favors the funding agency or the researchers.

	With a single agent, we characterize (i) the agent-optimal information, (ii) the principal-worst information, and (iii) the principal-optimal information. For concrete examples, making the signal distribution the least informative is principal-worst and the most informative being principal-optimal. An agent-optimal information pools information above a cutoff signal and fully reveals information below the cutoff. 
	
	By characterizing (i) and (ii), we find that any agent-optimal information is principal-worst, but not the converse. This result resembles the equivalence between agent-optimal and principal-worst information in \cite{roesler2017buyer}. However, both the problem formulations and the results are different. In \cite{roesler2017buyer}, the buyer and the seller are roughly playing a zero-sum game: the monetary surplus of trading is split between them and payoffs are perfectly transferable. Yet in our paper, the objectives of the principal and the agent are not directly comparable: the principal maximizes the allocation surplus whereas the agent maximizes the probability of obtaining the good. Actually, the two parties' interests are overlapped, which makes the opposition in solutions somewhat surprising at the first glance. In terms of results, \cite{roesler2017buyer} offers an equivalence as a consequence of the formulation, whereas our connection between the agent-optimal and principal-worst information is only one direction: some principal-worst information distributions are not agent-optimal. 
	
	The principal's payoff under a principal-worst information provides an upper bound for the payoff that can be achieved by a ``robust'' mechanism which does not depend on details of the agent's type distribution. We find a robust mechanism that does achieve such an upper bound payoff, which is therefore an optimal robust mechanism. The mechanism resembles the optimal favored-agent mechanism in \cite{ben2014optimal}, but there is a natural difference: the optimal favored-agent mechanism uses a threshold that is calculated from the agent's type distribution and the associated checking cost, whereas the optimal robust mechanism uses the mean value of the agent's type as threshold.
	
	With multiple agents, agent-optimal information maximizes the total probability of agents' obtaining the good. Compared with the prior distribution, under some agent-optimal information, all agents can be better off; while under some other agent-optimal information, some agents get worse off. Moreover, agent-optimal informations may deliver different payoffs to the principal, which implies that an agent-optimal information need \textit{not} be principal-worst.

	In contrast, the robust-mechanism result extends perfectly to the multiple-agent setting. An optimal robust mechanism simply allocates the good to the agent who has the highest expected type. This result is more general than it appears. Particularly, in similar models of allocation without transfer \citep{mylovanov2017optimal,chua2019optimal,li2021mechanism}, the aforementioned simple mechanism is an optimal robust mechanism. Moreover, allowing for the correlated signal distributions does not affect the result. 

	The rest of this section reviews the literature. In Section \ref{sec:model}, we set out the model, the allocation problem and the associated information design problems. In Section \ref{sec:benchmark}, we present a solution to the benchmark allocation problem, which we attribute to \cite{ben2014optimal}. Section \ref{sec:agent-optimal} characterizes the agent-optimal information. Section \ref{sec:principal} characterizes the principal-worst information and the principal-optimal information, and explores the implications of those characterizations. In Section \ref{sec:multiple}, we extend our model to allow for multiple agents. Section \ref{sec:conclusion} concludes. All omitted proofs are in Appendix \ref{sec:proofs}.

	\subsection*{Related literature}
	\label{sec:literature}

	Our paper is related to three streams of literature: allocation with costly verification, information design, and robust mechanism design.

	The literature on costly state verification is initiated by \cite{townsend1979optimal} which studies optimal debt contracts; see also \cite{gale1985incentive} and \cite{mookherjee1989optimal}. Unlike those earlier papers, \cite{ben2014optimal} studies the role of costly verification in allocation problems \textit{without} monetary transfer. Since then, their model has been modified or extended to different directions; see, e.g., \cite{mylovanov2017optimal}, \cite{halac2020commitment}, \cite{erlanson2020costly}, \cite{li2021mechanism}, \cite{epitropou2019optimal}, \cite{chua2019optimal} and  \cite{kattwinkel2019costless}, among many others.\footnote{\ \cite{ben2019mechanisms} studies a general model of mechanism design with evidence, which differs significantly from the allocation problem of \cite{ben2014optimal}. However, it turns out that the results there can be used to solve allocation problems in, say, \cite{erlanson2020costly} and \cite{chua2019optimal}.} We depart from \cite{ben2014optimal} by introducing an information designer who can influence the prevailing information in the allocation problem.

	We formulate the information design problems  \`{a} la \cite{roesler2017buyer} in allocation problems with costly verification. Although the analysis of information design advances rapidly in the auction literature (see, e.g., \cite{bergemann2017first}, \cite{yang2019buyer, yang2021efficient} and \cite{chen2020information}) and in other areas (see \cite{bergemann2019information} for a comprehensive survey), its counterpart in the aforementioned allocation setting is rarely studied. One exception is \cite{kattwinkel2019costless}, which shows in a single-agent model that a principal who privately observes a signal correlated with the agent's type does not profit from persuading the agent to reveal his information with any form of information design. Information design in their paper pertains to the disclosure of the principal's private signal to the agent who knows his own type. However, in our setting, there is an independent information designer who designs the entire uncertainty in the underlying allocation problem, perceived by both the principal and the agent.\footnote{\ Another difference is that they focus on the correlation between the principal's signal and the agent's type, whereas correlation is not our focus. Nevertheless, both their results and ours convey the (loosely) similar idea that making information more precise is better for the principal.}
	
	Our exercise of robust mechanism design follows the convention of, e.g., \cite{bergemann2016informationally}, \cite{du2018robust}, \cite{koccyiugit2020distributionally}, \cite{brooks2021optimal} and \cite{he2022correlation}. Focusing on the allocation problem of \cite{ben2014optimal}, our paper is most closely related to \cite{bayrak2017optimal} and \cite{bayrak2022optimal}. Instead of solving the principal-worst information design problem first, like we do, \cite{bayrak2017optimal} directly works on the mechanism design problem with uncertain type distributions. Particularly, they adopt the linear programming approach initiated by \cite{vohra2012optimization}, which necessitates the assumption of discrete (and finitely many) types that we do not impose.\footnote{\ Our paper also differs from \cite{bayrak2017optimal} in the following two aspects: First, they assume i.i.d. distributions of agent types whereas we allow for arbitrary asymmetry among agents; see Section \ref{sec:multiple}. Second, their robust mechanism is with respect to a set of distributions that is notably different from ours, i.e., they either assume first-order stochastic dominance within the set of possible distributions or restrict attention to a binary set, whereas we work with a general class of sets of distributions that have the same mean without imposing other structures.} Contemporary with our paper, \cite{bayrak2022optimal} studies a similar robust mechanism design problem with respect to two sets of distributions, where the second is more related to our analysis. The Markov ambiguity set, as they call it, is parametrized by the lower and upper bounds on the expected type for each agent. When the lower and upper bounds coincide for all agents, this special case of the general Markov ambiguity set coincides with a special case of our general set of possible distributions. Therefore, in terms of robust mechanism design, the two papers complement each other in both methods and results.

	\section{The model}
	\label{sec:model}

	\subsection{Preliminaries}
	\label{sec:preliminaries}
	
	We slightly modify the setup of \cite{ben2014optimal} first, to open up a window for information design. Namely, a principal is to decide whether or not to allocate an indivisible good to an agent; the case of multiple agents will be studied in Section \ref{sec:multiple}. 
	
	The net value to the principal of allocating the good is $t - R$, where $R$ is the principal's reserve value and $t$ is the agent's type. $R$ is fixed and is common knowledge among the agents and the principal. In contrast, there is an underlying distribution for $t$ that is over the interval $T:=[\ubar{t},\bar{t}]$, where $0 \leq  \ubar{t} <  R < \bar{t}  < \infty$. We denote by $F$ the distribution function of $t$ and $f$ the density; assume $f(t) > 0$ for all $t\in T$. Unlike in \cite{ben2014optimal}, we assume that neither the principal nor the agent knows the distribution $F$ or its realization $t$, but they may receive related information from an \textit{information designer}. 
	
	Following \cite{roesler2017buyer}, there is an information designer who can influence the ``private information'' that the agent receives and the ``distributional information'' that the principal perceives. More formally, the information designer knows the underlying distribution $F$ (but not its realization). She can design a statistical experiment which reveals some information about $t$ to the agent and the principal. Generally, this is through a joint distribution between $t$ and another random variable, which is called ``signal'' and denoted by $s$, subject to the constraint that the marginal distribution of $t$ is still $F$. The agent privately learns the realized signal $s$ and the principal only learns the distribution of $s$. 
	
	According to \cite{roesler2017buyer}, without loss of generality, we restrict attention to \textit{unbiased signals} such that $\mathbb{E}(t \vert s) = s$, where $s$ naturally belongs to $T$. Furthermore, the payoffs of both the principal and the agent are determined by the marginal distribution of the signal $s$ (since $t - R$ is linear in $t$ and signals are unbiased estimation of $t$). Thus, we further restrict our attention to the \textit{marginal distributions} of unbiased signals. Let $\mathcal{G}$ be the set of all cumulative distribution functions defined over $T$ and $\mathcal{G}_F \subseteq \mathcal{G}$ the set of marginal distributions of unbiased signals. Based upon the characterization of \cite{blackwell1953equivalent}, the set $\mathcal{G}_F$ is exactly the set of \textit{mean-preserving contractions (MPC)} of $F$, i.e. 
	$$
	\mathcal{G}_F = \left\lbrace G \in \mathcal{G}:  \int_{\ubar{t}}^{x} G(s) \mathrm{d} s \le \int_{\ubar{t}}^{x} F(t)\mathrm{d} t  \quad \forall x\in T\quad \text{ and } \quad \int_{\ubar{t}}^{\bar{t}} G(s) \mathrm{d} s = \int_{\ubar{t}}^{\bar{t}} F(t)\mathrm{d} t \right\rbrace.
	$$
	For example, $F\in \mathcal{G}_F$. Let $\mu = \mathbb{E}(t)$. Then the degenerate distribution $G = \delta(\mu)$ that assigns probability one to the atom $s = \mu$ is also in $\mathcal{G}_F$; we refer to $\delta(\mu)$ as the \textit{null information} distribution. In what follows, the information designer has the flexibility to choose a marginal distribution from $\mathcal{G}_F$.

	The principal can \emph{check} the agent's private signal $s$ at a cost $c > 0$. We interpret checking as obtaining information (e.g. by requesting documentation, interviewing the agent, or hiring outside evaluators) which perfectly reveals the signal of the agent. The cost to the agent of providing information is assumed to be zero. To avoid the trivial case, let us assume $R + c \le \max_{s\in T} s = \bar{t}$; otherwise the principal never checks the agent.
	
	We assume that the agent strictly prefers receiving the good to not receiving it. Consequently, we can take the agent's payoff to be the probability with which he receives the good. The intensity of the agents' preference plays no role in the analysis, so it is omitted. We also assume that the agent's reservation utility is less than or equal to his utility from not receiving the good. Since monetary transfers are not allowed, not receiving a good delivers the worst payoff to an agent. Consequently, individual rationality constraints do not bind and so are disregarded throughout.

	\subsection{Direct mechanisms}
	\label{sec:dm}
	
	%
	
	The mechanism design part is identical to that of \cite{ben2014optimal}, except that the type $t$ is replaced by signal $s$. 
	A general mechanism in this setting is a game that specifies (i) what messages the agent can report, (ii) which of those messages lead to checking (and with which probability), and (iii) depending on the reported message and checking outcome, whether the agent receives the good (and with which probability). We can apply the same argument as in the online appendix of \cite{ben2014optimal}, up to terminology replacement, to establish a revelation principle and to obtain the following two necessary conditions for any direct mechanism to be optimal: 
	\begin{enumerate}[leftmargin= 0.5\parindent]
		\item If the agent is checked and found lying, then the conditional probability of him receiving the good has to be zero.
		
		\item If the agent is checked and found truth-telling, then the conditional probability of him receiving the good has to be one.
	\end{enumerate}
	Hence, we restrict our attention to the ``(simplified) direct mechanisms.''
	
	Formally, a \emph{direct mechanism} consists of (i) a checking rule $q$ that maps each reported signal $s \in T$ to a checking probability $q(s)\in [0,1]$, which is also the checking-and-assigning probability since the agent is truthful on equilibrium path, and (ii) a total allocation rule $p$ that maps each reported signal $s \in T$ to a total assignment probability $p(s)\in [0,1]$. Naturally, $p(s) \ge q(s)$ for all $s\in T$ as the former also accommodates the assignment probability conditional on not being checked. The artificial game tree in Figure \ref{fig:direct} illustrates the simplified direct mechanism.
	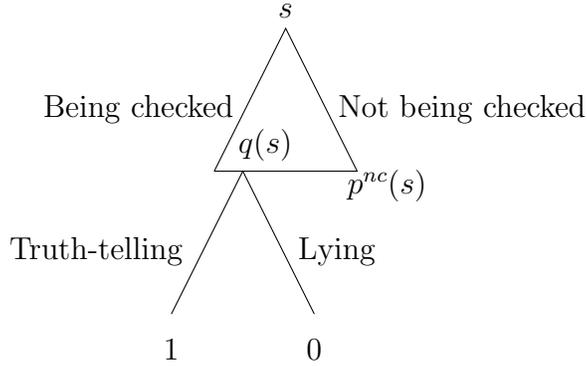
\begin{figure}[H]
		\begin{center}
			\begin{tikzpicture}[scale=0.95]
				\draw (0,0)node[above,xshift=-10mm,yshift=5mm]{Being checked}--(1,2) node [above]{$s$} -- (2,0)node[above,xshift=14mm,yshift=5mm]{Not being checked}--(0,0);
				\draw (-.6,-2) node[above,xshift=-10mm,yshift=5mm]{Truth-telling} --(0.4,0)node[above, xshift = 3mm]{$q(s)$}--(1.4,-2) node[above,xshift=3mm,yshift=5mm]{Lying};
				\node at (-.6,-2.5) {$1$};
				\node at (1.4,-2.5) {$0$};
				\node at (2,-0.2) [xshift=4mm] {$p^{nc}(s)$};
			\end{tikzpicture}
		\end{center}
	\vspace{-3mm}
		\caption{The simplified direct mechanism, where $p^{nc}(s)$ is the assignment probability conditional on no checking and $p(s) = q(s) \cdot 1 + (1-q(s)) \cdot p^{nc}(s)$.}
		\label{fig:direct}
	\end{figure}

	\subsection{Mechanism design and information design}
	\label{sec:md}
	
	%
	%
	Given a signal distribution (interchangeably, designed information) $G\in \mathcal{G}_F$, the principal selects a mechanism $(p,q)$ to maximize her expected net payoff:
	\begin{align}
		\max_{p,q} \quad & \mathbb{E}_{G}  \left[ p(s) (t-R) -q(s)c \right] \label{md} \tag{MD} \\
		\text{ subject to } \quad
		& p(s) \in [0,1], \quad \forall s\in T,  \label{eq:pfqf} \\
		& q(s) \in [0,1], \quad \forall s\in T, \\
		&q(s)\leq p(s) , \quad \forall s\in T, \label{eq:n}\\
		&p(s) \ge p(s') - q(s'), \quad \forall s, s'\in T.\label{eq:ic}
	\end{align}
	(\ref{md}) is the objective of the principal; sometimes we also refer to the entire constrained maximization as problem (\ref{md}). Constraints (\ref{eq:pfqf})-(\ref{eq:n}) are the feasibility constraints. Constraint (\ref{eq:ic}) is the incentive compatibility constraint to ensure that the agent always prefers truth-telling to lying. 
	
	Problem (\ref{md}) here is a simplification of the allocation problem in \cite{ben2014optimal}: Instead of having multiple agents with private information, we only have one agent with private information and another ``naive agent''\textemdash the principal\textemdash who has a constant type $R$. Therefore, the general results in \cite{ben2014optimal} fully characterize the solution to problem (\ref{md}), which is the benchmark of our analysis and we will review it shortly in Section \ref{sec:benchmark}. 
	
	The main focus of our paper is the information design problems associated with problem (\ref{md}) and its multiple-agent counterpart.
	We will study information design from both the agent's perspective and the principal's perspective, and set up optimization problems in corresponding sections.

	\section{Benchmark optimal mechanism for a given $G$}
	
	\label{sec:benchmark}

	This section presents the solution to the benchmark mechanism design problem (\ref{md}), which we adopt from \cite{ben2014optimal}. 
	
	The optimal mechanism is essentially unique and is determined by a single threshold $s^*$.\footnote{\ The word ``essential'' means that the optimal mechanism is unique up to manipulation on a set of measure zero. For example, if a mechanism is optimal, then the alternative mechanism that retains the good more frequently on a measure zero set is also optimal. See \cite{ben2014optimal} for a formal notion of equivalence between different mechanisms.} Particularly, $s^*$ is defined implicitly by the following equation
	\begin{equation}
		\label{eq:t}
		\mathbb{E}_G (s) = \mathbb{E}_G (\max \{s, s^*\}) - c.   \tag{TH}
	\end{equation}
	Since the equation is equivalent to $\int_{\ubar{t}}^{s^*} G(s) ds = c$, it has a unique solution unless $c = 0$. If $c = 0$, any $s^* \le \min supp(G)$ solves the equation, but it is natural to take $s^* =  \min supp(G)$ since it is the limit value of $s^*$ as $c$ goes to $0$ from above. Overall, there is a unique solution to (\ref{eq:t}) in $[ \min supp(G), \infty)$. We usually suppress the dependence of $s^*$ on $G$ when there is no ambiguity. For notational convenience, we denote by $t^*$ the threshold defined by equation (\ref{eq:t}) but using the underlying distribution $F$ instead of $G$. 
	
	If we view the principal as another ``agent $0$,'' then the principal has a constant type $R$ and we have a similar equation to define a threshold $t^*_0$ for the principal:
	$$
	\mathbb{E}_{G_0} (s_0) = \mathbb{E}_{G_0} (\max \{s_0, s^*_0\}) - c_0,
	$$
	where $s_0 \equiv R$ and $c_0 = 0$. Obviously, any $s_0^* \le R = \min supp(G_0)$ solves the equation.

	\begin{prop}[\cite{ben2014optimal}]
		\label{prop:benchmark}
		
		The following mechanism is essentially the unique optimal mechanism:
		\begin{enumerate}[leftmargin = 0.5\parindent]
			\item If $s^* - c \ge R$, then the agent receives the good without being checked.
			
			\item If $s^* - c < R$, then proceed as follows:
			\begin{enumerate}
				\item If $s - c < R$, then the principal retains the good.
				
				\item If $s - c \ge R$, then the agent is checked (truth-telling on equilibrium path) and allocated the good. 
			\end{enumerate}
		\end{enumerate}
	\end{prop}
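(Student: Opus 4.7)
The plan is to reduce the infinite-dimensional problem (\ref{md}) to a one-parameter maximization over the minimum allocation probability $m := \inf_s p(s)$, solve pointwise for $(p,q)$ given $m$, and then show that the comparison between the two extremal values $m \in \{0,1\}$ is controlled exactly by the sign of $s^* - c - R$.

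First, because the signal is unbiased, the tower property gives $\mathbb{E}_G[p(s)(t-R)] = \mathbb{E}_G[p(s)(s-R)]$, so I rewrite the objective in (\ref{md}) in terms of $s$ alone. The IC constraint $p(s)\ge p(s')-q(s')$ for every $s,s'$ is equivalent to $\inf_s p(s) \ge \sup_{s'}[p(s')-q(s')]$. Setting $m := \inf_s p(s)$, this reads $q(s) \ge p(s)-m$ for every $s$; combined with $q(s)\ge 0$ and the fact that $q$ enters the objective with a negative coefficient, the optimal checking rule is $q(s) = \max\{0,\,p(s)-m\}$. For fixed $m$, the problem thus reduces to
\[
\max_{p(\cdot)\in[m,1]}\ \mathbb{E}_G\!\left[p(s)(s-R) - c\max\{0,\,p(s)-m\}\right],
\]
which splits pointwise into a one-variable piecewise-linear problem on $[m,1]$. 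Inspecting its slope gives $p(s)=1$, $q(s)=1-m$ when $s-R>c$, and $p(s)=m$, $q(s)=0$ when $s-R\le c$.

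Plugging the pointwise solution back in, the resulting objective value $V(m)$ is affine in $m$, so the optimum is attained at an endpoint of $[0,1]$. The endpoint $m=1$ yields $p\equiv 1$, $q\equiv 0$ (case 1 of the statement), with value $V(1)=\mathbb{E}_G[s-R]$; the endpoint $m=0$ yields the ``retain-or-check-and-allocate'' rule at threshold $R+c$ (case 2), with value $V(0)=\mathbb{E}_G[(s-R-c)^+]$. Using the decomposition $s-R-c=(s-R-c)^+-(R+c-s)^+$ in expectation,
\[
V(1)-V(0)\ =\ c-\int_{\ubar{t}}^{R+c}(R+c-s)\,dG(s).
\]
Meanwhile, (\ref{eq:t}) is equivalent to $\int_{\ubar{t}}^{s^*}(s^*-s)\,dG(s)=c$, and since $x\mapsto\int_{\ubar{t}}^{x}(x-s)\,dG(s)$ is nondecreasing in $x$, I conclude $V(1)\ge V(0)\iff R+c\le s^*\iff s^*-c\ge R$, matching the regime split in the proposition.

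Essential uniqueness follows because the pointwise maximizer on $[m,1]$ is unique except on the null set $\{s=R+c\}$, and the endpoint choice of $m$ is strictly forced whenever $s^*-c\ne R$. The main obstacle will be the algebraic manipulation connecting $V(1)-V(0)$ to the defining equation of $s^*$; once this integral identity is in hand, the remaining steps are routine bookkeeping.
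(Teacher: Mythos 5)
Your proof is correct and follows essentially the same route as the paper's: both reduce the problem by writing $q(s)=p(s)-m$ with $m=\inf_s p(s)$, solve the inner problem in $p$ pointwise at the threshold $R+c$, exploit affinity in $m$ to compare the endpoints $m\in\{0,1\}$, and tie $V(1)\ge V(0)$ to $s^*-c\ge R$ via the defining integral identity $\int_{\ubar{t}}^{s^*}(s^*-s)\,\mathrm{d}G(s)=c$. The only presentational difference is that your direct pointwise argument replaces the paper's slightly longer detour through an abstract ``threshold mechanism'' characterization with a free threshold $v^*$ that is only later pinned down to $R$.
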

	
	For completeness and easy reference, a proof of Proposition \ref{prop:benchmark} is included in Appendix \ref{sec:proofs}; readers who are familiar with \cite{ben2014optimal} would immediately see that the mechanism in Proposition \ref{prop:benchmark} is simply a special case of their optimal favored-agent mechanism for multiple agents.
	
	The optimal mechanism delivers the following payoff to the principal:
	\begin{equation}
		\label{eq:y}
		Y:= \begin{cases}
			\int_{\ubar{t}}^{\bar{t}}(s - R) \mathrm{d} G(s) & \text{ if } s^* - c \ge R\\
			\int_{R + c }^{\bar{t}} (s - c - R) \mathrm{d} G(s) & \text{ if } s^* - c < R,
		\end{cases} 
		\tag{Y}
	\end{equation}
	where the former can also be written as  $\mathbb{E}_{G}(s) - R = \mathbb{E}_{F}(t) - R = \mu - R$.

	We break ties in favor of the agent to facilitate the study of the agent-optimal information design problem in Section \ref{sec:agent-optimal}. Specifically, there are two kinds of ties to consider. First, when $s - c = R$, the principal is indifferent between retaining the good and allocating the good to the agent with checking. Second, from the proof of the proposition in Appendix \ref{sec:proofs}, we know that when $s^* - c  =  (resp. <) \ R$, 
	\begin{align}
		\mu - R  \ = \ &  \int_{\ubar{t}}^{\bar{t}}(s - R) \mathrm{d} G(s) \nonumber \\
		\ = \ &    (resp. <) \ \int_{R + c }^{\bar{t}} (s - c - R) \mathrm{d} G(s), \label{eq:payoffcomparison} \tag{C}
	\end{align}
	i.e. the principal is indifferent between allocating the good without checking and allocating the good only if $s - c \ge R$ and the agent tells the truth. In both cases, we opt to let the agent receive the good. Of course, the principal is always indifferent between tie-breaking rules.\footnote{\ One may break ties the other way around if interested in the agent-worst information design problem, which, however, is not the focus of the current paper.}

	\section{Agent-optimal information design}
	
	\label{sec:agent-optimal}

	We characterize the agent-optimal information in this section. The agent wants to obtain the good with the largest probability, given that the principal employs an optimal mechanism to maximize her expected payoff. 
	
	Before we set up the agent-optimal information design problem, we first observe that mean preserving contraction leads to a larger threshold. Recall that $s^*$ is uniquely defined on $[\min supp(G), \infty)$. Moreover, if $c > 0$, then $s^* > \ubar{t}$.
	
	\begin{lemma}
		\label{lem:sbiggerthant}
		$s^* \ge t^*$ for any $G\in \mathcal{G}_F$.
	\end{lemma}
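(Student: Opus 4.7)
The plan is to reduce both threshold equations to simple area identities and then exploit the mean-preserving contraction inequality directly.

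First, I would rewrite the defining equation~(\ref{eq:t}) in its integrated form. Using the identity $\mathbb{E}_G(\max\{s,s^*\}) - \mathbb{E}_G(s) = \int_{\ubar{t}}^{s^*} G(s)\,\mathrm{d}s$ (integration by parts, applied to the positive part $(s^* - s)^+$), the equation becomes $\int_{\ubar{t}}^{s^*} G(s)\,\mathrm{d}s = c$. The analogous manipulation for the underlying distribution gives $\int_{\ubar{t}}^{t^*} F(t)\,\mathrm{d}t = c$. This is essentially already noted in the excerpt immediately after equation~(\ref{eq:t}).

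Next, I would invoke the MPC characterization of $\mathcal{G}_F$ at the point $x = t^*$, which yields
\[
\int_{\ubar{t}}^{t^*} G(s)\,\mathrm{d}s \;\le\; \int_{\ubar{t}}^{t^*} F(t)\,\mathrm{d}t \;=\; c \;=\; \int_{\ubar{t}}^{s^*} G(s)\,\mathrm{d}s.
\]
So the nondecreasing function $H(x) := \int_{\ubar{t}}^{x} G(s)\,\mathrm{d}s$ satisfies $H(t^*) \le H(s^*)$.

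Finally I would argue $s^* \ge t^*$ by contradiction. Suppose $s^* < t^*$. Then monotonicity of $H$ combined with the previous display forces $H$ to be constant on $[s^*, t^*]$, i.e., $G(s) = 0$ for (almost) all $s \in [s^*, t^*]$. Since $G$ is a CDF this means $G \equiv 0$ on $[\ubar{t}, t^*]$, so $\min\mathrm{supp}(G) \ge t^* > s^*$. But whenever $c > 0$ we have $s^* > \min\mathrm{supp}(G)$ (since $H(\min\mathrm{supp}(G))=0 < c = H(s^*)$), a contradiction. When $c = 0$, by the convention in the excerpt $s^* = \min\mathrm{supp}(G) \ge \ubar{t} = t^*$, so the inequality holds trivially.

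I don't anticipate any real obstacle here: the only subtlety is handling the possibility that $H$ is flat between $s^*$ and $t^*$, which is ruled out via the $s^* \ge \min\mathrm{supp}(G)$ convention together with the MPC inequality. Everything else is an immediate consequence of mean-preserving contraction being expressed as pointwise dominance of integrated CDFs.
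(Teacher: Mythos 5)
Your proof is correct and is essentially the paper's argument in different coordinates: the paper works with $x \mapsto \mathbb{E}_G(\max\{s,x\})$ and the convex-function form of MPC, while you work with $H(x) = \int_{\ubar{t}}^{x}G(s)\,\mathrm{d}s$ and the integrated-CDF form, but these differ only by the additive constant $\mathbb{E}_G(s)$ (integration by parts, a reformulation the paper itself records right after equation~(\ref{eq:t})). Both proofs then rule out $s^*<t^*$ via the same underlying fact---the convention that $s^*$ is taken in $[\min\mathrm{supp}(G),\infty)$, so the relevant nondecreasing function is strictly increasing at the threshold---which the paper phrases as ``$s^*$ is the greatest threshold attaining $S$'' and you phrase via the support of $G$.
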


	Since the agent obtains the good with probability one at all types if $s^* - c \ge R$, and with probability one only when $s - c \ge R$ if $s^* - c < R$ (Proposition \ref{prop:benchmark}), the \textit{agent-optimal information design} problem is as follows:

	\begin{enumerate}[leftmargin = 0.5\parindent]
		\item If there exists $G\in \mathcal{G}_F$ such that $s^* - c \ge R$, then any such $G$ maximizes the agent's probability of obtaining the good.
		
		\item If for all $G \in \mathcal{G}_F$ we have $s^* - c < R$, then we need to solve 
		\begin{equation}
			\label{eq:idao}
			\max_{G\in \mathcal{G}_F} \quad 1 - G^{-}(R + c), \tag{ID-AO}
		\end{equation}
		where $G^{-}(R + c) := \lim_{s \uparrow R + c} G(s)$. (\ref{eq:idao}) becomes trivial if $R + c > \bar{t}$, since $G^{-}(R + c) \ge G(\bar{t}) = 1$ implies $G^{-}(R + c) = 1$ for any $G \in \mathcal{G}_F$. This is why we have assumed $R + c \le \bar{t}$ in Section \ref{sec:preliminaries}.
		
	\end{enumerate}

	\begin{prop}
		\label{prop:agent-optimal}
		Given $R$, $F$ and $c$:
		\begin{enumerate}[leftmargin = 0.5\parindent]
			\item If $\mu\geq R$, then there exists a distribution $G\in\mathcal{G}_F$ such that $s^*-c\geq R$, and any such $G$ maximizes the agent's payoff. In particular, null information satisfies 
			$
			s^* - c = \mathbb{E}_G(s) = \mu \ge R
			$
			and is agent-optimal.
			\item If $\mu<R$, then an information $G\in \mathcal{G}_{F}$ is agent-optimal if and only if $G$ has an atom $R + c$ with probability $1- F(s^{\dagger})$, where $s^{\dagger}$ is determined by 
			\begin{equation}
				\label{eq:s1}
				R + c = \mathbb{E}_{F}\left(t \big \vert t\in \left[s^{\dagger}, \bar{t}\right] \right).\footnote{\ The threshold $s^{\dagger}$ uniquely exists and satisfies $s^{\dagger} < R + c \le \bar{t}$ since $f(t)>0$ for all $t\in T$.}
			\end{equation}
			In particular, the following information is agent-optimal:
			\begin{equation}
				\label{eq:aoi1}
				\hat{G}(s) =\begin{cases}
					F(s) & \text{ for all } s\in[\underline{t},s^{\dagger})\\
					F(s^{\dagger}) & \text{ for all }s \in[s^{\dagger},R+c)\\
					1&  \text{ for all }s \in[R+c,\bar{t}].
				\end{cases} \tag{AOI}
			\end{equation}
		\end{enumerate}
	\end{prop}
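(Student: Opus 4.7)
I would split the argument at $\mu$ versus $R$. In the case $\mu \ge R$, I would verify directly that the null information $\delta(\mu) \in \mathcal{G}_F$ satisfies (TH) with $s^\ast = \mu + c$, so $s^\ast - c = \mu \ge R$; Proposition~\ref{prop:benchmark}(1) then gives the agent probability one, which is his maximum possible payoff. The same conclusion applies to any other $G \in \mathcal{G}_F$ satisfying $s^\ast - c \ge R$, establishing part~(1).

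In the case $\mu < R$, I would first show $s^\ast - c < R$ for every $G \in \mathcal{G}_F$. Supposing to the contrary that $s^\ast \ge R+c$, I would combine the defining equation $\int_{\ubar{t}}^{s^\ast} G(s)\,\mathrm d s = c$ with mean preservation $\int_{\ubar{t}}^{\bar t} G(s)\,\mathrm d s = \bar t - \mu$ to obtain $\int_{R+c}^{\bar t} G(s)\,\mathrm d s \ge \bar t - \mu - c$; since $G \le 1$ this forces $\mu \ge R$, a contradiction. So Proposition~\ref{prop:benchmark}(2) applies, the agent wins exactly when $s \ge R+c$, and the relevant objective is $1 - G^-(R+c)$ as in (ID-AO).

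The heart of the proof is the upper bound $1 - G^-(R+c) \le 1 - F(s^\dagger)$. Representing $G$ by an unbiased coupling $(t,s)$ with $\mathbb{E}[t \mid s] = s$ and setting $p = \mathrm{Prob}(s \ge R+c)$, I would chain
\[
(R+c)\,p \;\le\; \mathbb{E}\!\left[s\,\mathbf 1_{\{s \ge R+c\}}\right] \;=\; \mathbb{E}\!\left[t\,\mathbf 1_{\{s \ge R+c\}}\right] \;=\; \mathbb{E}_F[t\,Q(t)] \;\le\; \mathbb{E}_F\!\left[t\,\mathbf 1_{\{t \ge F^{-1}(1-p)\}}\right],
\]
where $Q(t) := \mathbb{E}[\mathbf 1_{\{s \ge R+c\}} \mid t] \in [0,1]$ satisfies $\mathbb{E}_F[Q] = p$, and the last step is the bang-bang principle concentrating the $[0,1]$-weight on the highest values of $t$. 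Dividing through by $p$ gives $R+c \le \mathbb{E}_F[t \mid t \ge F^{-1}(1-p)]$; strict monotonicity of this conditional mean in the threshold (ensured by $f > 0$) combined with the identity $\mathbb{E}_F[t \mid t \ge s^\dagger] = R+c$ yields $F^{-1}(1-p) \ge s^\dagger$, i.e.\ $p \le 1 - F(s^\dagger)$.

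To close, I would verify $\hat G \in \mathcal{G}_F$ by checking the MPC integral inequalities on the three regions $[\ubar{t},s^\dagger]$, $[s^\dagger,R+c]$, $[R+c,\bar t]$; the mean-preservation condition at $x=\bar t$ reduces precisely to~(\ref{eq:s1}). Since $\hat G^-(R+c) = F(s^\dagger)$, the bound is tight and $\hat G$ is agent-optimal. For the \emph{only if} direction in part~(2), agent-optimality forces equality throughout the displayed chain; tightness of the first inequality means $s = R+c$ almost surely on $\{s \ge R+c\}$, so $G$ must place an atom of mass exactly $1 - F(s^\dagger)$ at $R+c$ (and no mass strictly above). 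Conversely, any $G \in \mathcal{G}_F$ carrying such an atom has $\mathrm{Prob}(s \ge R+c) \ge 1 - F(s^\dagger)$, which combined with the upper bound delivers agent-optimality. I expect the main obstacle to be the bang-bang step, which requires conditioning on $t$ to reduce an expectation over the joint $(t,s)$-law to a maximization over $[0,1]$-valued functions of $t$ with a fixed mean.
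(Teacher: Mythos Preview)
Your argument is correct and takes a genuinely different route from the paper's. For the key upper bound $1-G^-(R+c)\le 1-F(s^\dagger)$, the paper works purely at the level of cumulative distribution functions: it supposes $G^-(R+c)<F(s^\dagger)$ and, using the intermediate distribution $\hat G$ together with the MPC integral inequality $\int_{\ubar t}^{x}G\le\int_{\ubar t}^{x}F=\int_{\ubar t}^{x}\hat G$, derives the contradiction $\int_{\ubar t}^{\bar t}G<\int_{\ubar t}^{\bar t}\hat G$. You instead exploit the probabilistic representation of $G$ as the law of an unbiased signal and apply a Hardy--Littlewood/Neyman--Pearson rearrangement to the weight $Q(t)=\mathbb{E}[\mathbf 1_{\{s\ge R+c\}}\mid t]$. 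The payoff of the paper's route is that it is entirely elementary and never leaves the CDF calculus. The payoff of yours is unification: the ``only if'' direction falls out immediately from tracing equality through the same chain, whereas the paper handles it by a separate integration-by-parts computation showing $\int_{\ubar t}^{(R+c)^-}s\,\mathrm dG\ge\int_{\ubar t}^{s^\dagger}s\,\mathrm dF$ and then deducing that the conditional mean on $[R+c,\bar t]$ under $G$ is at most $R+c$. Your proof of $s^*-c<R$ via the integral identity $\int_{\ubar t}^{s^*}G=c$ is also slightly different from (and a bit more direct than) the paper's Lemma~\ref{lem:etsmall}, which argues through $\mathbb{E}_G(\max\{s,s^*\})$.
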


	The proposition provides a full characterization of the agent-optimal information. The particular distribution $\hat{G}$ in part 2 is derived from $F$ by concentrating the probability mass of $F$ over $[s^{\dagger}, \bar{t}]$ onto a single point $R+c$, as required by (\ref{eq:s1}), and maintaining $F$ on $[\underline{t},s^{\dagger})$. This is illustrated in the left panel of Figure \ref{fig:agentoptimal}. Intuitively, (\ref{eq:idao}) says that reducing the probability mass on $[\ubar{t}, R + c)$ and increasing the probability mass on $[R + c, \bar{t}]$ would benefit the agent. This can be done to the largest extent, while maintaining the MPC constraint $G\in \mathcal{G}_F$, if we concentrate as much probability mass as possible to the single point $R + c$. The endogenously determined $s^{\dagger}$ captures such a limit. 
	
	However, $\hat{G}$ is not the unique agent-optimal information. For example, $\tilde{G}$ in the right panel of Figure \ref{fig:agentoptimal} is also agent-optimal. Generally, varying the lower part of an agent-optimal information does not affect its optimality as long as the MPC constraint is maintained.
	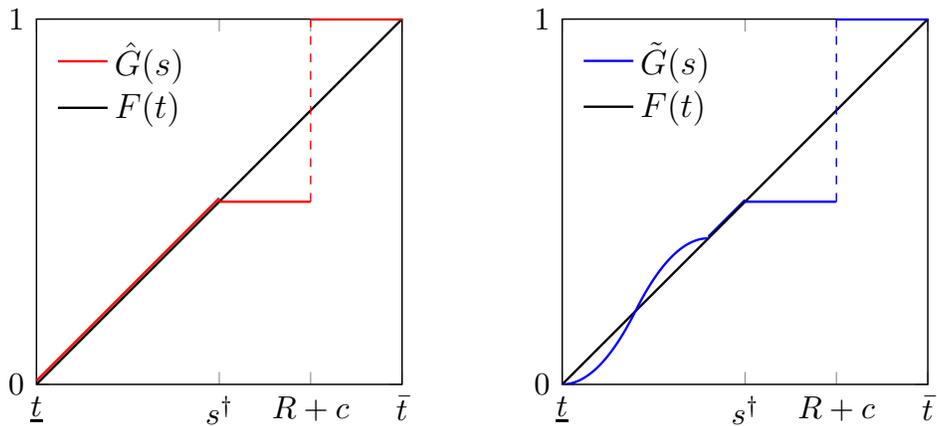
\begin{figure}[H]
		\begin{center}
			\begin{tikzpicture}[scale=1.1]
				\begin{axis}[
					tick label style={font=\small},
					xlabel={ },
					ylabel={ },
					ytick={ 0, 1},
					yticklabels={  0, 1},
					xtick={ 0,0.5,0.75,1},
					xticklabels={$\underline{t}$ ,$s^{\dagger}$, $R+c$, $\bar{t}$},
					no markers,
					line width=0.5pt,
					cycle list={{red,solid}},
					samples=200,
					smooth,
					domain=0:1.3,
					xmin=0, xmax=1,
					ymin=0, ymax=1,
					width=6cm, height=6cm,
					legend cell align=left,
					legend pos=  north west,
					legend style={draw=none,fill=none,name=legend},
					]
					\addplot[red,thick,domain=0:0.5]{x+0.01};
					\addplot[black,thick,domain=0:1]{x};
					\legend{$\hat{G}(s)$,$F(t)$};
					\addplot[red,thick,domain=0.5:0.75]{0.5};
					\addplot[red,thick,domain=0.75:1]{0.999};
					
					\addplot[red,dashed] coordinates {
						(0.75,0.5)
						(0.75,1)
					};
				\end{axis}
			\end{tikzpicture}
			\quad \quad \quad 
			\begin{tikzpicture}[scale=1.1]
				\begin{axis}[
					tick label style={font=\small},
					xlabel={ },
					ylabel={ },
					ytick={ 0, 1},
					yticklabels={ 0, 1},
					xtick={ 0,0.5,0.75,1},
					xticklabels={$\underline{t}$ ,$s^{\dagger}$, $R+c$, $\bar{t}$},
					no markers,
					line width=0.5pt,
					cycle list={{red,solid}},
					samples=200,
					smooth,
					domain=0:1.3,
					xmin=0, xmax=1,
					ymin=0, ymax=1,
					width=6cm, height=6cm,
					legend cell align=left,
					legend pos=  north west,
					legend style={draw=none,fill=none,name=legend},
					]
					\addplot[blue,thick,domain=0.4:0.5]{x+0.005};
					\addplot[black,thick,domain=0:1]{x};
					\legend{$\tilde{G} (s)$,$F(t)$};
					\addplot[blue,thick,domain=0.5:0.75]{0.5};
					\addplot[blue,thick,domain=0.75:1]{0.999};
					
					\addplot[blue,dashed] coordinates {
						(0.75,0.5)
						(0.75,1)
					};
					\addplot[blue,thick,domain=0:0.2]{5*x*x};
					\addplot[blue,thick,domain=0.2:0.4]{ 0.4-5*(x-0.4)*(x-0.4)};
				\end{axis}
			\end{tikzpicture}
		\end{center}
		\caption{Two examples of agent-optimal information.}
		\label{fig:agentoptimal}
	\end{figure}

Technically, some features of an agent-optimal information are implied by the results of \cite{dworczak2019simple} and \cite{kleiner2021extreme}. In particular, given an information distribution, the equilibrium outcomes of both the principal and the agent under the optimal mechanism in Proposition \ref{prop:benchmark} can be reformulated as a special case of the principal's expected payoff and one of the agent's expected utility \`{a} la \cite{dworczak2019simple} and \cite{kleiner2021extreme}. Thus, applying Theorem 3 of the former or Proposition 2 of the latter, we know that some agent-optimal information partitions signals into intervals and is pooling or fully revealing for different intervals. Unlike their characterizations without explicit solutions, we fully characterize the solutions in a concrete context. Moreover, our direct and elementary proof facilitate economic interpretations and the study of other research questions, such as the one to be answered in Proposition \ref{prop:agentprincipal}.

	Now we turn to the principal's payoff. Given any information $G \in \mathcal{G}_{F}$, the principal's payoff from an optimal mechanism is (\ref{eq:y}).
	Since the two cases in Proposition \ref{prop:agent-optimal} are comparing $\mu$ and $R$, whereas the two cases in (\ref{eq:y}) are comparing $s^* - c$ and $R$, we shall clarify the connection to facilitate the calculation of principal payoff. The following lemma does so and it is also used in the proof of Proposition \ref{prop:agent-optimal}.
	
	\begin{lemma}
		\label{lem:etsmall}
		If $\mu < R$, then for any $G\in\mathcal{G}_F$, we have $s^* - c < R$.
	\end{lemma}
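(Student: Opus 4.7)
The plan is to exploit the defining equation (\ref{eq:t}) of $s^{*}$ together with the fact that $G\in\mathcal{G}_{F}$ is a mean-preserving contraction of $F$, so $\mathbb{E}_{G}(s)=\mathbb{E}_{F}(t)=\mu$. Substituting this into (\ref{eq:t}) yields the equivalent identity
\begin{equation*}
\mathbb{E}_{G}\bigl(\max\{s,s^{*}\}\bigr)=\mu+c.
\end{equation*}

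Next I would use the trivial pointwise bound $\max\{s,s^{*}\}\geq s^{*}$ for every $s\in T$, which under any distribution $G$ implies $\mathbb{E}_{G}(\max\{s,s^{*}\})\geq s^{*}$. Combining this with the identity above gives $s^{*}\leq \mu+c$. Since by hypothesis $\mu<R$, it follows that $s^{*}-c\leq \mu<R$, which is the desired strict inequality.

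There is essentially no obstacle here; the only points worth checking are (i) that $\mathbb{E}_{G}(s)=\mu$ indeed holds for every $G\in\mathcal{G}_{F}$, which is built into the definition of $\mathcal{G}_{F}$ via the second integral condition in the characterization of mean-preserving contractions recalled in Section \ref{sec:preliminaries}, and (ii) that the reformulation $\mathbb{E}_{G}(\max\{s,s^{*}\})=\mu+c$ of (\ref{eq:t}) is legitimate, which follows immediately by substitution. The argument is thus a two-line bound and does not require any case analysis on whether $c=0$ or $c>0$, nor on the location of $s^{*}$ relative to $\min\mathrm{supp}(G)$.
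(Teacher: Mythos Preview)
Your proof is correct and takes essentially the same approach as the paper. Both hinge on the pointwise bound $\max\{s,x\}\ge x$ combined with the defining equation (\ref{eq:t}); the only cosmetic difference is that the paper argues by contradiction (assuming $s^{*}\ge R+c$ and deriving $\mu\ge R$), whereas you argue directly by establishing $s^{*}-c\le\mu$ and then invoking $\mu<R$.
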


	Using the particular information $\hat{G}$ such that $\hat{G}(s) = F(s) $ for all $s\in[\underline{t},s^{\dagger})$, we can easily calculate the principal's payoff under an agent-optimal information:
	\begin{equation}
		\label{eq:yao}
		Y^{AO} :=
		\begin{cases}
			\mu - R & \text{ if } \mu \ge R\\
			0 & \text{ if } \mu < R.
		\end{cases}\tag{Y-AO}
	\end{equation}
	As we will see shortly, all agent-optimal information deliver the same payoff (\ref{eq:yao}) to the principal.

	We close this section by noticing a welfare consequence of information design. As documented by \cite{roesler2017buyer}, the buyer-optimal information structure can generate efficient trade which is generally not occurred without information design. Here in our context of allocation with costly verification, we have a similar observation that information design can make allocation more efficient. For example, when $\mu < R$ (which implies $t^* - c < R$), the optimal mechanism under the prior distribution $F$ allocates the good if and only if $t - c \ge R$ (Proposition \ref{prop:benchmark}). Note that it is ex post more efficient to allocate the good whenever $t \ge R$. Therefore, slightly pooling information around $R + c$ via mean-preserving contraction would improve allocation efficiency.

	\section{Principal-related information design}
	\label{sec:principal}

	\subsection{Principal-worst information design}

	\label{sec:principal-worst}

	The principal-worst information design problem is as follows: 
	\begin{align}
		\min_{G \in \mathcal{G}_F}\quad \max_{p,q} \quad & \mathbb{E}_{G}  \left[ p(s) (s-R) -q(s)c \right]      \label{eq:pw} \tag{PW}\\
		\text{ subject to } & (\ref{eq:pfqf})-(\ref{eq:ic})\nonumber
	\end{align}
	and its solution is characterized below.
	
	\begin{prop}
		\label{prop:principal-worst}
		Given $R$, $F$ and $c$:
		\begin{enumerate}[leftmargin = 0.5\parindent]
			\item If $\mu \ge R$, an information $G\in \mathcal{G}_{F}$ is principal-worst if and only if $s^* - c \ge R $. In particular, null information $\delta(\mu)$ is principal-worst.
			
			\item If $\mu<R$, an information $G\in \mathcal{G}_{F}$ is principal-worst if and only if $G(R+c) = 1$. In particular, null information $\delta(\mu)$ is principal-worst.
		\end{enumerate}
	\end{prop}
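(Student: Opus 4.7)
The plan is to turn the case-by-case formula (\ref{eq:y}) into a uniform lower bound on $Y$ that does not depend on $G$, and then characterize the distributions for which the bound is tight. By (\ref{eq:y}), either $Y = \mu - R$ (when $s^{*}-c\ge R$) or $Y = \int_{R+c}^{\bar{t}}(s-c-R)\,\mathrm{d}G(s)$ (when $s^{*}-c<R$). The comparison (\ref{eq:payoffcomparison}) tells us that in the latter regime the integral strictly exceeds $\mu - R$, while at the boundary $s^{*}-c = R$ the two coincide. Consequently, for every $G\in\mathcal{G}_F$, $Y\ge \mu - R$, with equality if and only if $s^{*}-c\ge R$; and in addition $Y\ge 0$, since the integrand $s-c-R$ is non-negative on $[R+c,\bar{t}]$.

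For part 1, assume $\mu\ge R$. Then $\mu - R\ge 0$ is the binding bound, and the ``iff'' above characterizes the principal-worst distributions as exactly those with $s^{*}-c\ge R$. To see the set is non-empty and includes null information, note that for $G=\delta(\mu)$ equation (\ref{eq:t}) reduces to $\mu = \max\{\mu,s^{*}\}-c$, giving $s^{*}=\mu+c$ and hence $s^{*}-c=\mu\ge R$.

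For part 2, assume $\mu<R$. Lemma \ref{lem:etsmall} forces $s^{*}-c<R$ for every $G\in\mathcal{G}_F$, so $Y=\int_{R+c}^{\bar{t}}(s-c-R)\,\mathrm{d}G(s)\ge 0$. Because the integrand is strictly positive on $(R+c,\bar{t}]$ and vanishes at $s=R+c$, the integral equals zero if and only if $G$ assigns no mass to $(R+c,\bar{t}]$, i.e., $G(R+c)=1$. For $G=\delta(\mu)$ all mass sits at $\mu<R+c$, so $G(R+c)=1$ and the minimum zero is attained.

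The main non-routine step is extracting the uniform lower bound $Y\ge\mu-R$ from the two-branch formula (\ref{eq:y}); once (\ref{eq:payoffcomparison}) supplies the required inequality, the rest is direct minimization in each regime together with a brief verification for $\delta(\mu)$.
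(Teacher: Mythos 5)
Your proposal is correct and follows essentially the same route as the paper's proof: derive the uniform lower bound $Y \ge \mu - R$ (with strict excess exactly when $s^*-c<R$) from (\ref{eq:y}) and (\ref{eq:payoffcomparison}), then invoke Lemma~\ref{lem:etsmall} in the $\mu < R$ case to reduce to the integral branch and observe it vanishes iff $G(R+c)=1$, with $\delta(\mu)$ verifying attainability in both regimes. No substantive differences from the paper's argument.
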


	Using null information, together with (\ref{eq:y}), we can easily calculate the principal's worst-case payoff:
	\begin{equation}
		\label{eq:ypw}
		Y^{PW} :=
		\begin{cases}
			\mu - R & \text{ if } \mu \ge R \\
			0 & \text{ if } \mu < R.
		\end{cases}\tag{Y-PW}
	\end{equation}
	\noindent Comparing (\ref{eq:yao}) and (\ref{eq:ypw}), we immediately see that the principal's payoff is the same under the agent-optimal information and the principal-worst information, i.e. $Y^{PW} = Y^{AO}$. This is surprising in the sense that the objectives of the two information design problems are very different: The agent-optimal problem concerns the \textit{probability} that the agent receives the good, whereas the principal-worst problem concerns the \textit{payoff} to the principal. 
	
	Here is the reason for such coincidence: (a) When $\mu \ge R$, the agent-optimal information leads to probability-one allocation without checking, which obviously achieves the principal's lower-bound payoff $\mu - R$. (b) When $\mu < R$, since $s^* - c < R $ for all $G\in \mathcal{G}_{F}$ (Lemma \ref{lem:etsmall}), allocating the good to the agent necessarily incurs checking for any $G\in \mathcal{G}_{F}$ and the principal's payoff is $\int_{R + c }^{\bar{t}} (s - c - R) \mathrm{d} G(s)$. We can rewrite the payoff as follows:
	$$
	\int_{R+c}^{\bar{t}} s \mathrm{d} G(s) - [1- G^{-}(R + c)] (R + c).
	$$
	So intuitively, maximizing $[1- G^{-}(R + c)]$ (agent-optimal) at least partially coincides with minimizing the principal's payoff. 
	
	Formally, we have the following proposition:
	\begin{prop}
		\label{prop:agentprincipal}
		Any agent-optimal information is principal-worst.
	\end{prop}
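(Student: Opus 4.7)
The plan is to prove the claim by directly comparing the characterizations of agent-optimal informations provided by Proposition \ref{prop:agent-optimal} with those of principal-worst informations provided by Proposition \ref{prop:principal-worst}, and to show that in each regime the former condition implies the latter.

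Split the argument into the two cases $\mu \ge R$ and $\mu < R$. In the first case, Proposition \ref{prop:agent-optimal} part 1 combined with Proposition \ref{prop:benchmark} implies that every agent-optimal $G$ must satisfy $s^*-c\ge R$: the maximum agent payoff, attained at any agent-optimal $G$, equals one (since the existence part of Proposition \ref{prop:agent-optimal} part 1 guarantees this value is feasible), and by Proposition \ref{prop:benchmark} a probability-one allocation requires $s^*-c\ge R$ (if $s^*-c<R$ the agent's probability would be $1-G^-(R+c)<1$, since one can quickly verify that $G^-(R+c)=0$ forces $s^*-c\ge R$). But $s^*-c\ge R$ is exactly the condition in Proposition \ref{prop:principal-worst} part 1 characterizing principal-worst information, so this case is immediate.

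The second case $\mu<R$ is the main obstacle. Here Proposition \ref{prop:agent-optimal} part 2 says an agent-optimal $G$ must have an atom at $R+c$ of mass $1-F(s^\dagger)$, while Proposition \ref{prop:principal-worst} part 2 requires $G(R+c)=1$. To bridge these, I would use that the distribution $\hat{G}$ in (\ref{eq:aoi1}) attains the optimum value of problem (\ref{eq:idao}), namely $1-\hat{G}^-(R+c)=1-F(s^\dagger)$; consequently any other agent-optimal $G$ must also attain this same optimum, giving $G^-(R+c)=F(s^\dagger)$. Combining with the atom mass yields
\begin{equation*}
G(R+c) \;=\; G^-(R+c) + \bigl(1-F(s^\dagger)\bigr) \;=\; F(s^\dagger) + 1 - F(s^\dagger) \;=\; 1,
\end{equation*}
which is exactly the principal-worst condition. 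The crux of the argument is recognizing that attaining the optimum value of (\ref{eq:idao}) together with an atom of the prescribed size at $R+c$ leaves no room for any mass strictly above $R+c$; once this is in hand, the equivalence with the principal-worst characterization in Proposition \ref{prop:principal-worst} part 2 finishes the proof.
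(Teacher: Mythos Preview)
Your proof is correct and follows essentially the same route as the paper. In the case $\mu<R$, the paper argues by contradiction (if $G(R+c)<1$ then $1-G^{-}(R+c)$ would exceed $1-F(s^{\dagger})$, contradicting the optimality of $\hat G$), while you phrase the identical computation directly; both rely on the same two ingredients, namely the atom size $1-F(s^{\dagger})$ from Proposition~\ref{prop:agent-optimal} and the optimal value $G^{-}(R+c)=F(s^{\dagger})$ of (\ref{eq:idao}). For the case $\mu\ge R$ the paper simply says ``trivial''; your argument spelling out why an agent-optimal $G$ must satisfy $s^{*}-c\ge R$ (in particular ruling out $G^{-}(R+c)=0$ with $s^{*}-c<R$ via the defining equation $\int_{\ubar t}^{s^{*}}G(s)\,ds=c$) is a welcome clarification.
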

	Two remarks are in order. First, the converse of Proposition \ref{prop:agentprincipal} is not true. For example, null information is principal-worst, but it is usually not agent-optimal. Second, an agent-optimal distribution, say $\hat{G}$ in (\ref{eq:aoi1}), may be more informative than the null distribution in the sense that null distribution comes from coarsening of the underlying probability space. However, the proposition says that both of them are principal-worst, i.e., the additional informativeness of $\hat{G}$ does not help the principal to generate higher payoff.

	Proposition \ref{prop:agentprincipal} resembles the equivalence between buyer-optimal and seller-worst information in \cite{roesler2017buyer}. However, both the problem formulations and the results are different. In \cite{roesler2017buyer}, the buyer and the seller are playing a zero-sum game: the monetary surplus of trading is split between them and payoffs are perfectly transferable. Yet in our model, the agent's and the principal's interests are largely overlapped, which makes the opposition in solutions surprising at the first glance. In terms of results, \cite{roesler2017buyer} offers an equivalence as a consequence of the formulation, whereas our result is only one directional: some principal-worst information distributions are not agent-optimal. 
	
	\subsection{Robust mechanism design}
	\label{sec:robust}
	
	Characterizing principal-worst information facilitates the study of robust mechanism design. More precisely, suppose for now that there is \textit{no} information designer. Suppose the principal faces ambiguity of the agent's type distribution; she only knows that the distribution lies in $\mathcal{G}$ and its expectation is $\mu = \mathbb{E}(t)$. What is the principal's optimal \textit{robust mechanism} that depends only on $\mu$ but not on the specific type distribution? 
	This problem can be formulated as follows:
	\begin{align}
		\max_{p,q} \quad \min_{\tiny \begin{matrix}
				G \in \mathcal{G}\\
				\mathbb{E}_G (t) =\mu
		\end{matrix}}\quad & \mathbb{E}_{G}  \left[ p(t) (t-R) -q(t)c \right]     \label{eq:robust} \tag{Robust}\\
		\text{ subject to } & (\ref{eq:pfqf})-(\ref{eq:ic}). \nonumber
	\end{align}
	We find an optimal robust mechanism indirectly through a solution to problem (\ref{eq:pw}). The mechanism is presented in Proposition \ref{prop:robustresult} below, followed by discussions on the principal's payoff; a more general statement and the related discussions are delegated to Section \ref{sec:multiplePW}.

	\begin{prop}
		\label{prop:robustresult}
		The following mechanism is optimal within robust mechanisms:
		\begin{enumerate}[leftmargin = 0.5\parindent]
			
			\item If $ \mu \ge R$, then the agent receives the good without being checked.
			
			\item If $ \mu < R$, then proceed as follows:
			\begin{enumerate}
				\item If $t - c < R$, then the principal retains the good.
				
				\item If $t - c \ge R$, then the agent is checked (truth-telling on equilibrium path) and receives the good.
			\end{enumerate}
		\end{enumerate}
	\end{prop}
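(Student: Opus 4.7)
My plan is to sandwich the robust value between an upper bound from the standard max--min inequality and a matching lower bound realized by the proposed mechanism.

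\textbf{Upper bound.} By the max--min inequality, the value of (\ref{eq:robust}) is at most
\[
\inf_{G:\,\mathbb{E}_G(t)=\mu}\ \max_{p,q}\ \mathbb{E}_{G}\!\left[p(t)(t-R)-q(t)c\right].
\]
Since $\delta(\mu)$ is a mean-$\mu$ distribution, this infimum is in turn bounded by the value at $G=\delta(\mu)$, which reduces to $\max_{p,q\text{ feasible}}\!\left[p(\mu)(\mu-R)-q(\mu)c\right]$. A short computation using $p(\mu),q(\mu)\in[0,1]$ and $p(\mu)\geq q(\mu)$ shows this maximum equals $\mu-R$ when $\mu\geq R$ (attained at $p(\mu)=1,\,q(\mu)=0$) and $0$ when $\mu<R$ (attained at $p(\mu)=q(\mu)=0$, since $\mu-R<0$ and $c>0$ force non-positivity). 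This matches the principal-worst payoff $Y^{PW}$ in (\ref{eq:ypw}); equivalently, since $\delta(\mu)\in\mathcal{G}_F$ for any $F$ with mean $\mu$, one can route the bound through Proposition~\ref{prop:principal-worst}.

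\textbf{Achievability.} I then verify that the proposed mechanism is feasible, incentive-compatible, and delivers at least $Y^{PW}$ against every mean-$\mu$ distribution. In Case~1 the rule $p\equiv 1,\,q\equiv 0$ satisfies (\ref{eq:pfqf})--(\ref{eq:ic}) trivially and yields payoff $\mathbb{E}_G(t)-R=\mu-R$ under every such $G$. In Case~2 the rule sets $p(t)=q(t)=1$ on $\{t\geq R+c\}$ and $0$ elsewhere; then $p(s')-q(s')\equiv 0$, so (\ref{eq:ic}) holds automatically and the remaining constraints are obvious. The payoff under any mean-$\mu$ distribution is $\int_{R+c}^{\bar{t}}(t-R-c)\,\mathrm{d}G(t)\geq 0$, with equality attained at, for example, $G=\delta(\mu)$ (feasible since $\mu<R<R+c$). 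Hence the worst-case payoff equals $\max\{\mu-R,0\}=Y^{PW}$, matching the upper bound.

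\textbf{Main obstacle.} The only conceptual issue requiring care is the quantifier alignment between (\ref{eq:robust}) and (\ref{eq:pw}): the adversary in (\ref{eq:robust}) ranges over all mean-$\mu$ distributions, whereas that in (\ref{eq:pw}) ranges only over $\mathcal{G}_F$, a strictly smaller set, so one cannot simply equate the two values. The argument sidesteps this by invoking $\delta(\mu)$, which belongs to both classes and already drives the payoff down to $Y^{PW}$, thereby making the upper bound tight regardless of which admissible class the adversary minimizes over. Once this alignment is clear, the rest of the proof collapses to the short verifications sketched above.
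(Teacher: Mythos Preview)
Your proposal is correct and follows essentially the same approach as the paper: bound the robust value above by the principal-worst payoff $Y^{PW}$ via the max--min inequality, then verify the stated mechanism achieves this bound against every admissible distribution. The only cosmetic difference is that you derive the upper bound directly from $\delta(\mu)$ (and add explicit feasibility/IC checks), whereas the paper invokes Proposition~\ref{prop:principal-worst}; since you yourself note these routes are equivalent, the arguments coincide.
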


	We now compare the optimal robust mechanism and the benchmark optimal mechanism in Section \ref{sec:benchmark}, under distribution $F$. First of all, the benchmark optimal mechanism uses the detail-dependent threshold $t^* - c$, whereas the robustly optimal mechanism uses the information-independent threshold $\mu$. In terms of payoffs, the principal's payoff under the benchmark optimal mechanism is 
	given in (\ref{eq:y}), with $s$ replaced by $t$ and $G$ replaced by $F$, i.e.,
	\begin{equation*}
		Y=
		\begin{cases}
			\mu - R & \text{ if } t^* - c \ge R \\
			\int_{R + c }^{\bar{t}} (t - c - R) \mathrm{d} F(t) & \text{ if } t^* - c < R ,
		\end{cases}
		\tag{Y}
	\end{equation*}
	whereas her payoff under the optimal robust mechanism is 
	\begin{equation}
		\label{eq:yrobust}
		\tag{Y-robust}
		Y^{rbst} := 
		\begin{cases}
			\mu - R & \text{ if } \mu \ge R \\
			\int_{R + c }^{\bar{t}} (t - c - R) \mathrm{d} F(t) & \text{ if } \mu < R.
		\end{cases}
	\end{equation}
	Since any robust mechanism is by default feasible, of course we have $Y \ge Y^{rbst}$. Yet we are more interested in how different they are.

	The two payoffs differ only when $R$ satisfies $t^* - c < R \le \mu$\footnote{\ From equation (\ref{eq:t}), we know that 
		$\mathbb{E}(t) = \mathbb{E} (\max \{t - c,t^* - c\}),$
		which implies
		$
		t^* - c \le \mu.
		$}, in which case 
	\begin{equation}
		\label{eq:diff}
		\int_{R + c }^{\bar{t}} (t - c - R) \mathrm{d} F(t) > \mu - R.
	\end{equation}
	If we examine the gross payoffs rather than the net ones, then (\ref{eq:diff}) becomes
	$$
	\int_{\ubar{t}}^{R + c} R \mathrm{d} F(t) + \int_{R + c }^{\bar{t}} (t - c ) \mathrm{d} F(t) > \mu,
	$$
	which is equivalent to saying that
	$
	\mathbb{E}(\max\{R, t-c\}) > \mathbb{E}(t).
	$
	In other words, when $t^* - c < R \le \mu$, the benchmark optimal mechanism sacrifices a cost $c$ when $t > R + c$ to guarantee a payoff $R$ when $t < R + c$. The benefit from the latter is larger than the cost due to the former.
	

	\subsection{Principal-optimal information design}

	We now turn back to information design and characterize the solution to the principal-optimal information design problem below:
	\begin{align}
		\max_{G \in \mathcal{G}_F}\quad \max_{p,q} \quad & \mathbb{E}_{G}  \left[ p(s) (s-R) -q(s)c \right]      \label{eq:po} \tag{PO}\\
		\text{ subject to } & (\ref{eq:pfqf})-(\ref{eq:ic}). \nonumber
	\end{align}
	
	
	\begin{prop}
		\label{prop:principal-optimal}
		Given $R$, $F$ and $c$:
		\begin{enumerate}[leftmargin = 0.5\parindent]
			\item If $t^* - c \geq R$, the principal's payoff is independent of information design. In particular, full information $F \in \mathcal{G}_{F}$ is principal-optimal.

			\item If $t^* - c <R$, an information $G \in \mathcal{G}_{F}$ is principal-optimal if and only if $G^{-}(R + c) = F(R + c)$ and $\mathbb{E}_{G}(s \big\vert [R + c, \bar{t}]) = \mathbb{E}_{F}(s \big\vert [R + c, \bar{t}]) $. In particular, full information $F \in \mathcal{G}_{F}$ is principal-optimal.
			
		\end{enumerate}
	\end{prop}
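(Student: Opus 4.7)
The plan is to reduce the comparison across $\mathcal{G}_F$ to a single convex-order inequality. First I would introduce $V(G):=\int_{R+c}^{\bar{t}}(s-c-R)\,\mathrm{d}G(s)=\mathbb{E}_G\!\left[(s-(R+c))^{+}\right]$. A direct manipulation that splits $\int(s-R)\,\mathrm{d}G=\mu-R$ at $R+c$ and uses the defining relation $\int_{\ubar{t}}^{s^{*}}G=c$ gives $V(G)-(\mu-R)=\int_{\ubar{t}}^{R+c}G(s)\,\mathrm{d}s-c$, which is nonnegative if and only if $s^{*}-c\le R$. Combined with (\ref{eq:y}), this lets me write the principal's payoff uniformly as $Y(G)=\max\{\mu-R,V(G)\}$ for every $G\in\mathcal{G}_F$. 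Since $x\mapsto(x-(R+c))^{+}$ is convex and $G$ is a mean-preserving contraction of $F$, the convex order immediately yields $V(G)\le V(F)$. Therefore $Y(G)\le\max\{\mu-R,V(F)\}=Y(F)$, so full information $F$ is principal-optimal in both parts.

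For part 1, the hypothesis $t^{*}-c\ge R$ applied to the identity above (with $G=F$) gives $V(F)\le\mu-R$, so $V(G)\le\mu-R$ and $Y(G)=\mu-R$ for every $G\in\mathcal{G}_F$; the payoff is independent of information design, as claimed. (Lemma \ref{lem:sbiggerthant} provides an alternative one-line route via $s^{*}\ge t^{*}\ge R+c$.) For part 2, $t^{*}-c<R$ gives $V(F)>\mu-R$, hence $Y(F)=V(F)$, and $Y(G)=Y(F)$ reduces exactly to $V(G)=V(F)$. The ``if'' direction of the characterization is immediate: the two conditions $G^{-}(R+c)=F(R+c)$ and $\mathbb{E}_G(s\big\vert[R+c,\bar{t}])=\mathbb{E}_F(s\big\vert[R+c,\bar{t}])$ together imply equal mass and equal first moment of $G$ and $F$ on $[R+c,\bar{t}]$, from which $V(G)=V(F)$ is a one-line computation.

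The hard part will be the converse: extracting pointwise information from the scalar equality $V(G)=V(F)$. My approach is to use the tail identity $V(H)=\int_{R+c}^{\bar{t}}(1-H(x))\,\mathrm{d}x$ to rewrite $V(G)=V(F)$ as $\int_{R+c}^{\bar{t}}(F-G)\,\mathrm{d}x=0$; combined with the mean-preserving equality $\int_{\ubar{t}}^{\bar{t}}(F-G)\,\mathrm{d}x=0$, this yields $\int_{\ubar{t}}^{R+c}(F-G)\,\mathrm{d}x=0$ as well. The function $\Phi(x):=\int_{\ubar{t}}^{x}(F(s)-G(s))\,\mathrm{d}s$ is nonnegative on $T$ by the MPC definition of $\mathcal{G}_F$ and vanishes at $R+c$, so $R+c$ is a global minimum of $\Phi$. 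Comparing the left and right derivatives at this minimum yields $F(R+c)-G^{-}(R+c)\le 0\le F(R+c)-G(R+c)$, which together with $G^{-}(R+c)\le G(R+c)$ force $G^{-}(R+c)=G(R+c)=F(R+c)$. The continuity of $F$ at $R+c$, guaranteed by the assumption $f>0$ on $T$, is essential here to rule out an atom of $G$ at $R+c$. Given this mass equality, the conditional-mean equality then follows by substituting $G^{-}(R+c)=F(R+c)$ back into $V(H)=\int_{R+c}^{\bar{t}}s\,\mathrm{d}H(s)-(R+c)(1-H^{-}(R+c))$ for $H\in\{F,G\}$ and dividing through by $1-F(R+c)$.
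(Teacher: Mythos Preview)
Your proof is correct and rests on the same core idea as the paper's: the convex-order inequality $\mathbb{E}_G[(s-(R+c))^+]\le\mathbb{E}_F[(s-(R+c))^+]$ coming from $G\in\mathcal{G}_F$. The execution differs in two places worth noting. First, you package the two cases of (\ref{eq:y}) into the single formula $Y(G)=\max\{\mu-R,V(G)\}$ via the identity $V(G)-(\mu-R)=\int_{\ubar{t}}^{R+c}G-c$, which lets you dispose of parts 1 and 2 simultaneously; the paper instead treats the cases separately, solving a relaxed problem and then verifying the dropped constraint $s^*-c<R$. Second, and more substantively, for the ``only if'' direction in part 2 the paper argues informally that ``any non-trivial mean-preserving contraction of $F$ across $R+c$ makes (\ref{eq:principal-optimal1}) strict'' to obtain $G^{-}(R+c)=F(R+c)$, whereas your $\Phi$-minimum argument with one-sided derivatives is a fully rigorous derivation of the same conclusion (and in fact yields the extra information $G(R+c)=G^{-}(R+c)$, i.e.\ no atom at $R+c$, which is consistent with but not stated in the paper's characterization). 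Both approaches then read off the conditional-mean equality in the same way.
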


	Using full information, we know that the principal's payoff under principal-optimal information is given by (\ref{eq:y}), with $s$ replaced by $t$ and $G$ replaced by $F$. The comparison between the principal-optimal payoff and the principal-worst payoff is the same as that between (\ref{eq:y}) and (\ref{eq:yrobust}).\footnote{The agent's payoff does not vary within principal-optimal information. To see it, we still consider two cases. If $t^* - c \ge R$, then the agent receives the good with probability one, whatever the principal-optimal information is. If $t^* - c < R$, then according to the proof of Proposition \ref{prop:principal-optimal}, we know that any principal-optimal information $G$ must maintain $s^* - c < R$. Thus, under the optimal mechanism in Proposition \ref{prop:benchmark}, the agent receives the good if and only if $s - c \ge R$, with probability $1 - G^-(R + c)$. But part 2 of Proposition \ref{prop:principal-optimal} says that this probability is always $1 - F(R+c)$ and is independent of $G$.}

	If we think of the principal and the information designer in our model as the same person, then the spirit of Proposition \ref{prop:principal-optimal} echoes the key message of \cite{kattwinkel2019costless}. Namely, Proposition \ref{prop:principal-optimal} says that the principal who knows $F$ does not profit from any form of information design, e.g., choosing a signal distribution $G$ that is a coarsening of $F$. Somewhat similarly, in a principal-agent model where the principal privately observes a signal that is correlated with the agent's type, \cite{kattwinkel2019costless} show that the principal does not profit from any form of information design, e.g., releasing parts of her information to the agent to manipulate his beliefs. However, the difference is drastic. Information design in \cite{kattwinkel2019costless} pertains to the disclosure of the principal's private signal to the agent who knows his own type, whereas in our setting, the information designer designs the entire uncertainty in the underlying allocation problem, perceived by both the principal and the agent.

	\section{Multiple agents}
	\label{sec:multiple}

	When there are multiple agents, we denote the set of agents by $\mathcal{I} = \{1, \dots, I\}$. Let $\hat{\mathcal{I}}:= \{0, 1, \dots, I\}$, where $0$ stands for the principal. A generic agent or the principal is denoted by $i$; and we abuse the terminology ``agent'' to always say agent $i$. The distribution of agent $i$'s type is $F_i$ over $T_i = [\ubar{t}_i, \bar{t}_i]$. Assume for all $i$ that $0 < \ubar{t}_i < R < \bar{t}_i < \infty$, $R + c \le \bar{t}_i$ and $f_i(t_i) >0$ for all $t_i$. In the special case of $i = 0$, $t_i \equiv R$. Define profiles $\bt := (t_1, \dots, t_I)$ and define $\bt_{-i}$, $\mathbf{T}$ and $\mathbf{T}_{-i}$ in the usual manner.

	Let $\mathcal{G}_{F_i}$ be the set of marginal distributions of unbiased signals for agent $i$. The cost to verify agent $i$'s private signal is $c_i$; in the special case of $i = 0$, $c_i = 0$. Each agent $i$, including the principal, has a critical type $s_i^*$ that is defined in the manner of (\ref{eq:t}) using distribution $G_i \in \mathcal{G}_{F_i}$. 
	 
	 The information designer now has the flexibility to simultaneously choose $G_{i} \in \mathcal{G}_{F_i}$ for every $i\in \mathcal{I}$.  Assume that the information designer's choices of $G_{i}$'s are independent for now, which we will revisit later. Define $\mathbf{G} := (G_1, \dots, G_I)$ and define $\mathbf{G}_{-i}$, $\mathbf{s}$ and $\mathbf{s}_{-i}$ accordingly. 
	
	Given $\mathbf{G}$, the principal's mechanism design problem is as follows:
	\begin{align}
		\max_{\{p_i, q_i\}_{i\in \mathcal{I}}} \quad & \mathbb{E}_{\mathbf{G}} \left\lbrace  \sum_{i = 1}^{I} \left[ p_i(\mathbf{s}) (s_i - R) -q_i(\mathbf{s}) c_i \right] \right\rbrace \label{eq:mdmultiple} \tag{MD-M} \\
		\text{ subject to } \quad
		&0\le q_i(\mathbf{s})\leq p_i(\mathbf{s}) \le 1, \quad \forall \mathbf{s}\in \mathbf{T},  \quad \forall i\in \mathcal{I}, \label{eq:feasibilitymultiple}\\
		&\sum_{i = 1}^I p_i(\mathbf{s}) \le 1, \quad \forall \mathbf{s}\in \mathbf{T}, \\
		&\mathbb{E}_{\mathbf{G}_{-i}} \left[ p_i(s_i, \mathbf{s}_{-i}) \right] \ge \mathbb{E}_{\mathbf{G}_{-i}} \left[ p_i(s_i', \mathbf{s}_{-i}) \right] - \mathbb{E}_{\mathbf{G}_{-i}} \left[ q_i(s_i', \mathbf{s}_{-i}) \right], \nonumber \\
		 & \quad  \quad  \quad  \quad  \quad  \quad  \quad  \quad  \quad  \quad \quad  \quad  \quad  \quad  \quad  \quad \forall s_i, s_i' \in T_i,  \quad \forall i\in \mathcal{I}.  \label{eq:icmultiple}
	\end{align}
	We first recall the solution to this problem due to \cite{ben2014optimal}. As defined in their paper, a \textit{favored-agent mechanism} specifies a \textit{favored agent} $i^*\in \hat{\cI}$ (we treat the principal as an ``agent'' here) and a \textit{threshold} $v^* \in \mathbb{R}$ such that:
	\begin{enumerate}[leftmargin= 0.5\parindent]
		\item If $s_j - c_j< v^*$ for all $j \ne i^*$, then 
		$$
		p_{i^*}(\textbf{s}) = 1, \quad 
		p_j(\textbf{s}) = 0, \forall j\ne i^*; 
		\quad \text{ and }\quad
		q_j(\textbf{s}) = 0, \forall j.
		$$
		
		\item If there exists $j\ne i^*$ such that $s_j - c_j > v^*$ and $s_i - c_i > \max_{k\ne i} s_k - c_k $, then 
		$$
		p_i(\textbf{s}) = q_i(\textbf{s}) = 1, 
		\quad \text{ and } \quad
		p_j (\textbf{s}) = q_j(\textbf{s}) = 0, \forall j\ne i.
		$$
	\end{enumerate}
	
	\begin{prop}[\cite{ben2014optimal}]
		\label{prop:fam}
		The mechanism in each case below is essentially the unique optimal mechanism\footnote{\ We display two cases here to keep consistency with Proposition \ref{prop:benchmark}. If one regard the principal as an ``agent'', then the favored-agent mechanism that favors the ``agent'' who has the largest $s^* - c$ ($R$ for the principal) and adopts the threshold of such $s^* - c$, is essentially the unique optimal mechanism, exactly as in \cite{ben2014optimal}.}:
		\begin{enumerate}[leftmargin = 0.5\parindent]
			\item If $\max_{i \in \cI} \left\lbrace s_i^* - c_i \right\rbrace \ge R$, then implement the favored-agent mechanism with a favored agent $i^* \in \arg\max_{i \in \cI} \left\lbrace s_i^* - c_i \right\rbrace$ and threshold $v^* = s_{i^*}^* - c_{i^*}$.
			
			\item If $\max_{i \in \cI} \left\lbrace s_i^* - c_i \right\rbrace < R$, then implement the favored-agent mechanism with the principal being favored and threshold $v^* = R$.
		\end{enumerate}
	\end{prop}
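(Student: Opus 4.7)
The plan is to reduce problem (\ref{eq:mdmultiple}) to the allocation problem solved in \cite{ben2014optimal} and then invoke their characterization. First I would observe that once the information designer has fixed the profile $\mathbf{G} = (G_1,\dots,G_I)$, the principal's problem is structurally identical to the BDL setup: the signals $s_i$ with marginal distributions $G_i \in \mathcal{G}_{F_i}$ play exactly the role of their private types, the checking cost $c_i$ is inherited unchanged, and the constraints (\ref{eq:feasibilitymultiple})--(\ref{eq:icmultiple}) are word-for-word the feasibility, supply, and incentive compatibility constraints of BDL. Treating the principal as a dummy agent $0$ with constant ``type'' $R$ and cost $c_0 = 0$ (already anticipated in Section \ref{sec:benchmark}) slots her into the same framework.

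Given this translation, I would import BDL's main theorem. Their argument proceeds by (i) relaxing the IC constraints to obtain a linear program in $(p_i, q_i)$, (ii) identifying the binding downward IC constraints so that at the optimum $q_i$ is chosen to make the IC bind for all reports above the lowest allocated type, and (iii) reducing the objective to a pointwise maximization over a ``virtual surplus,'' which for each agent $i$ takes the form $s_i - c_i$ on the region where $i$ is checked and delivers the constant value $s_i^* - c_i$ (or $R$ for the principal) on the region where $i$ is favored without checking. The defining equation (\ref{eq:t}) is exactly the balance condition that pins down the cutoff for which the expected ``unchecked favored'' payoff equals the expected ``checked'' payoff. Pointwise maximization then selects the agent with the highest virtual value, and the monotonicity of the favored-vs-checked comparison yields the favored-agent structure.

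With that machinery in place, the two cases of the proposition follow by comparing $\max_{i \in \cI} \{s_i^* - c_i\}$ with $R$: if some agent's cutoff beats $R$, favoring that agent dominates retention by the principal, and the threshold $v^*$ is the next-best contender's virtual value, which after simplification equals $s_{i^*}^* - c_{i^*}$; otherwise, the principal herself is the favored ``agent'' with threshold $R$. Finally, I would verify essential uniqueness exactly as in BDL: any other optimal mechanism must agree with the favored-agent mechanism on the interior of each region, with discrepancies confined to measure-zero sets where ties occur.

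The main obstacle is not conceptual but verificational: one needs to confirm that the BDL argument goes through for \emph{every} $G_i \in \mathcal{G}_{F_i}$, including distributions with atoms or flats (which arise, for example, under the agent-optimal information $\hat{G}$ of Proposition \ref{prop:agent-optimal}). This requires checking that their duality/Lagrangian step only relies on $G_i$ being a probability distribution on a compact interval with a well-defined mean, and that the threshold $s_i^*$ is well-defined (which we established in Section \ref{sec:benchmark} on $[\min \operatorname{supp}(G_i),\infty)$). Since BDL's proof is written for general distributions without density assumptions and our reduction preserves compactness and finite means, this verification is routine but should be stated explicitly in the write-up.
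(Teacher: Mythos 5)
The paper offers no proof of this proposition; it simply cites \cite{ben2014optimal} and relies on their Theorems 1--2, exactly as you do, after observing that replacing each type distribution $F_i$ by a signal distribution $G_i \in \mathcal{G}_{F_i}$ leaves BDL's feasibility, supply, and incentive constraints unchanged and that the principal slots in as a dummy agent $0$ with constant type $R$ and $c_0 = 0$. Your appeal to their characterization, and your note that one should check the argument tolerates atoms and flats in $G_i$ (BDL's proof does not need densities, only compact support and finite mean, and $s_i^*$ was shown well-defined in Section~\ref{sec:benchmark}), is exactly the intended reading.

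One small correction to your sketch: you describe the threshold $v^*$ as ``the next-best contender's virtual value, which after simplification equals $s_{i^*}^* - c_{i^*}$.'' This is misleading. In the favored-agent mechanism the threshold is the \emph{favored agent's own} imputed value $s_{i^*}^* - c_{i^*} = \max_{i \in \cI}\{s_i^* - c_i\}$, not a second-highest quantity. The favored agent keeps the good precisely when every rival's realized $s_j - c_j$ falls short of this fixed threshold; there is no second-price-style dependence on the realized runner-up. Likewise, the downward IC constraint $q_i(\mathbf{s}) = p_i(\mathbf{s}) - \inf_{s_i'}\mathbb{E}_{-i}[p_i(s_i',\mathbf{s}_{-i})]$ binds for \emph{every} report, not only ``above the lowest allocated type,'' because checking is costly and any slack could be recovered by reducing $q_i$. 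Neither slip affects the conclusion you reach, but they should be fixed before the sketch is written out.
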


	\subsection{Agent-optimal information design}
	\label{sec:multipleAO}
	Now we revisit the agent-optimal information design problem in a multi-agent setting. In this setting,  the information designer wants to maximize the agents' \textit{aggregate} payoff, i.e. the total allocation probability.\footnote{\ Such an aggregate agent-optimal information is also a \textit{Pareto} agent-optimal information in the sense that no other information design can make some agent strictly better off without making any agent worse off. We provide a counterexample in Appendix \ref{sec:appPareto} saying that the converse is not true, i.e. a Pareto agent-optimal information needs not be aggregate agent-optimal. If we consider the special case where the prior distributions are identical, i.e. $F_i=F$, and require that the information designer can only choose identical signal distributions, i.e. $G_i=G\in\mathcal{G}_F$,  then obviously Pareto agent optimality is equivalent to aggregate agent optimality.
		\par
		We also explore the situation that the agents have the power to acquire the signals by themselves in Appendix \ref{sec:appendixAO}. We actually study a strategically information acquisition game among agents where those agents themselves are ``information designers''.}
The idea to solve this problem is much similar to the single-agent case characterized in Proposition \ref{prop:agent-optimal}.  Particularly, if there exists some agent $i$ who can be a favored agent ($s_i^*-c\geq R$), then in any case, the principal will allocate this good out. If there does not exist such a favored agent, then for all agents, the information designer will put as much as possible mass on $R+c$ to maximize the total allocation probability. We summarize this in Proposition \ref{prop:aggregate-agent-optimal}.
	\begin{prop}\label{prop:aggregate-agent-optimal}
		Given $R$, $\{F_i\}_{i\in\mathcal{I}}$ and $c$:
		\begin{enumerate}[leftmargin = 0.5\parindent]
			\item If there exists some $i\in\mathcal{I}$, such that $\mathbb{E}_{F_i}(t_i)\geq R$, then for agent $i$, there exists a distribution $G_i\in\mathcal{G}_{F_i}$ such that $s_i^*-c\geq R$. Any profile of signal distributions that has such a component
			is (aggregate) agent-optimal.
			\item If $\mathbb{E}_{F_i} (t_i)<R$ for all $ i\in\mathcal{I}$,  in which case we let $s_i^{\dagger}$ solve  
			\begin{equation*}
				\label{eq:s1}
				R + c = \mathbb{E}_{F_i} \left[t_i \bigg \vert t_i \in \left[s_i^{\dagger}, \bar{t}_i\right] \right]
			\end{equation*}
			for each $i$, 
			then a  profile of signal distributions 
			$\{G_j\}_{j\in\mathcal{I}}$ 
			is agent-optimal if and only if for each $i$,  $G_i$ has an atom $R + c$ with probability $1- F_i(s_i^{\dagger})$. In particular, the following profile of signal distributions is aggregate agent-optimal:
			\begin{equation*}
				\forall j\in \cI, \quad \hat{G}_j(s_j) =\begin{cases}
					F_j(s_j) & \text{ for all } s_j\in[\underline{t}_j,s_j^{\dagger})\\
					F_j(s_j^{\dagger}) & \text{ for all }s_j \in[s_j^{\dagger},R+c)\\
					1&  \text{ for all }s_j \in[R+c,\bar{t}_j].
				\end{cases} 
			\end{equation*}
		\end{enumerate}
	\end{prop}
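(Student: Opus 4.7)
The plan is to lift the single-agent analysis of Proposition \ref{prop:agent-optimal} to the multi-agent setting by exploiting two facts: the favored-agent mechanism of Proposition \ref{prop:fam} pins down the aggregate allocation probability in closed form, and the independence of the marginals $G_i$ allows the aggregate maximization to decouple across agents.

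First I would rewrite the aggregate allocation probability as a function of $\mathbf{G}$. By Proposition \ref{prop:fam}, whenever $\max_{i\in\mathcal{I}}(s_i^*-c)\geq R$ we are in Case 1 of the favored-agent mechanism, under which the good is always awarded to some agent and the total allocation probability equals $1$; whenever $\max_{i\in\mathcal{I}}(s_i^*-c)<R$ we are in Case 2, under which the good is allocated precisely when $\max_{i}s_i\geq R+c$ (retaining the agent-favorable tie-breaking inherited from the single-agent model). Under the independence assumption on $\mathbf{G}$, the latter probability equals $1-\prod_{i\in\mathcal{I}}G_i^-(R+c)$.

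For part 1 of the proposition, pick any $i\in\mathcal{I}$ with $\mathbb{E}_{F_i}(t_i)\geq R$. Applying Proposition \ref{prop:agent-optimal} to agent $i$ alone produces a $G_i\in\mathcal{G}_{F_i}$ with $s_i^*-c\geq R$ (for instance the null information $\delta(\mathbb{E}_{F_i}(t_i))$). Any profile containing such a component therefore satisfies $\max_{j}(s_j^*-c)\geq R$ and attains the trivial upper bound $1$ on the aggregate allocation probability, so it is optimal. For part 2, Lemma \ref{lem:etsmall} applied separately to each agent gives $s_i^*-c<R$ for every admissible $G_i$, so Case 2 of Proposition \ref{prop:fam} applies over the entire feasible set and the expression $1-\prod_{i}G_i^-(R+c)$ governs the objective throughout. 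Maximizing this product form decouples into independently minimizing $G_i^-(R+c)$ subject to $G_i\in\mathcal{G}_{F_i}$. Part 2 of Proposition \ref{prop:agent-optimal} characterizes this per-agent minimum as $F_i(s_i^{\dagger})$, attained if and only if $G_i$ carries an atom at $R+c$ of mass $1-F_i(s_i^{\dagger})$, and the explicit $\hat{G}_j$'s arise as the canonical verifying choices by direct analogy with (\ref{eq:aoi1}).

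The main obstacle is conceptually modest but must be spelled out carefully: verifying that independence genuinely reduces the aggregate problem to $I$ parallel single-agent problems, and that Case 2 of Proposition \ref{prop:fam} is guaranteed to apply globally in part 2 (which is precisely what Lemma \ref{lem:etsmall} delivers, agent by agent). Once these two observations are in place, both parts follow as corollaries of Proposition \ref{prop:agent-optimal}, with no additional optimization required beyond the single-agent argument. A minor bookkeeping point is to confirm that the favorable tie-breaking at the boundary $s_i=R+c$ ensures atoms placed exactly there count as allocations, so that the atom characterization is indeed both necessary and sufficient.
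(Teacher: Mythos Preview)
Your proposal is correct and follows essentially the same route as the paper: the paper's treatment of Proposition~\ref{prop:aggregate-agent-optimal} is the informal paragraph immediately preceding the statement, which sketches precisely your two-case reduction to the single-agent Proposition~\ref{prop:agent-optimal} via Proposition~\ref{prop:fam} and Lemma~\ref{lem:etsmall}. If anything, your write-up is more explicit than the paper about the decoupling step (that minimizing $\prod_i G_i^{-}(R+c)$ with strictly positive per-agent minima $F_i(s_i^{\dagger})$ forces each factor to its minimum), which is worth keeping.
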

	We then use an example below to show that \textit{aggregate} agent-optimal information design may affect \textit{individual} agents' payoffs in diverse ways: under some aggregate agent-optimal information, all agents can be equally better off; while under some other aggregate agent-optimal information, one agent gets better off at the cost of the other's payoff.
	
	\begin{example}
		\label{ex:diverse}
		\emph{Suppose there are two agents who have the same underlying type distribution $F$ which is uniform on $[0,1]$ and a common checking cost $c = 0.08$. Suppose the principal's reservation value is $R = 0.4$. In this case, it is straightforward to verify that $t^* = 0.4$ so that $t^* - c < R$. Therefore, under the benchmark optimal mechanism, the principal retains the good with probability $0.48^2$ and assigns the good to the two agents with  probability $ \left(1- 0.48^2\right) \approx 0.76$ in total, and $0.38$ for each.}
		
\paragraph{Both agents are better off.}  
\emph{
	Consider the designed signal distribution profile such that $\hat{G}_1 = \delta\left( \mathbb{E}_F (t) \right) = \delta\left(0.5\right)$ and 
\begin{equation*}
	\hat{G}_2(s) =\begin{cases}
		s & \text{ for all } s\in[0, 0.5)\\
		0.5 & \text{ for all }s \in[0.5, 0.55)\\
		0.6 &  \text{ for all }s \in[0.55, 0.6)\\
		s &  \text{ for all }s \in[0.6,1].
	\end{cases}
\end{equation*}
Namely, $\hat{G}_2$ is derived from $F$ by pooling the probability mass over $(0.5, 0.6)$ at its mean $0.55$.
Then $s_1^* = 0.58$ and $s_2^* = 0.4$. Since $s_1^* - c > R > s_2^* - c$, the optimal mechanism favors agent $1$ and the designed information is aggregate agent-optimal. Under this favored-agent mechanism, agent $1$ receives the good with probability $0.6$ and agent $2$ receives the good with probability $0.4$.
}

\paragraph{One is better off and the other is worse off.} 
\emph{
	Consider another designed signal distribution profile that  $\tilde{G}_1 = \delta\left( \mathbb{E}_F(t) \right) = \delta\left(0.5\right)$ and 
\begin{equation*}
	\tilde{G}_2(s) =\begin{cases}
		s & \text{ for all } s\in[0, 0.3)\\
		0.3 & \text{ for all }s \in[0.3, 0.55)\\
		0.8 &  \text{ for all }s \in[0.55, 0.8)\\
		s &  \text{ for all }s \in[0.8,1].
	\end{cases}
\end{equation*}
Namely, $\tilde{G}_2$ is derived from $F$ by pooling the probability mass over $(0.3, 0.8)$ at its mean $0.55$.
Then $s_1^* = 0.58$ and $s_2^* \approx 0.42$. Since $s_1^* - c > R > s_2^* - c$, the optimal mechanism favors agent $1$ and the designed information is aggregate agent-optimal. Under this favored-agent mechanism, agent $1$ receives the good with probability $0.8$ and agent $2$ receives the good with probability $0.2$.}
	\end{example}

Example \ref{ex:diverse} also serves as a counterexample, showing that the counterpart of Proposition \ref{prop:agentprincipal} does not hold any more. More specifically, $\left(\hat{G}_1, \hat{G}_2\right)$ and $\left(\tilde{G}_1, \tilde{G}_2\right)$ in Example \ref{ex:diverse} are both aggregate agent-optimal. The former leads to a principal payoff of 
$$0.6\cdot 0.5 + \int_{0.6}^1 (s_2 - 0.08) \mathrm{d} s_2 = 0.588,$$
whereas the latter leads to a principal payoff of 
$$0.8\cdot 0.5 + \int_{0.8}^1 (s_2 - 0.08) \mathrm{d} s_2 = 0.564 .$$
Since the two principal payoffs are different, at least one of them is not principal-worst, invalidating the counterpart of Proposition \ref{prop:agentprincipal}. Nevertheless, some agent-optimal information is indeed principal-worst. For example, we will see shortly in the next subsection that $(G_1, G_2) = \left(\delta(0.5), \delta(0.5)\right)$, which is also aggregate agent-optimal, is a principal-worst information and the corresponding principal payoff is $0.5$. 

	\subsection{Principal-related information design}
	\label{sec:multiplePW}
	
	We consider the principal-worst information design problem first. To come straight to the point, we formulate the information design problem and a robust mechanism design problem together: Let $\mathcal{G}_i$ be the set of all cumulative distribution functions defined over $[\ubar{t}_i, \bar{t}_i]$ (then $\mathcal{G}_{F_i} \subseteq \mathcal{G}_i$) and let $\mu_i = \mathbb{E}_{F_i}(t_i)$ be the mean type for agent $i$, which is the only information that the principal can rely on in designing robust mechanisms. The two problems are as follows:
		\begin{align}
		\min_{G_i \in \mathcal{G}_{F_i}, \forall i\in \mathcal{I}}\quad \quad \max_{\{p_i,q_i\}_{i\in \mathcal{I}}} \quad \quad & \mathbb{E}_{\mathbf{G}} \left\lbrace  \sum_{i = 1}^{I} \left[ p_i(\mathbf{s}) (s_i - R) -q_i(\mathbf{s}) c_i \right] \right\rbrace     \label{eq:pwm} \tag{PW-M}\\
		\text{ subject to } & (\ref{eq:feasibilitymultiple})-(\ref{eq:icmultiple})\nonumber\\
	\max_{\{p_i, q_i\}_{i\in \mathcal{I}}} \quad 
	\min_{\tiny \begin{matrix}
			G_i \in \mathcal{G}_{i}, \forall i\in \mathcal{I}\\
			\mathbb{E}_{G_i} (s_i) =\mu_i, \forall i\in \mathcal{I}
	\end{matrix}}\quad & \mathbb{E}_{\mathbf{G}} \left\lbrace  \sum_{i = 1}^{I} \left[ p_i(\mathbf{s}) (s_i - R) -q_i(\mathbf{s}) c_i \right] \right\rbrace   \label{eq:robustm} \tag{Robust-M}\\
	\text{ subject to } & (\ref{eq:feasibilitymultiple})-(\ref{eq:icmultiple}). \nonumber
	\end{align}

	Obviously, the principal-worst information design problem (\ref{eq:pwm}) is a min-max problem, which has a natural connection with (\ref{eq:robustm}) via the max-min inequality. To wit, since (omitting the identical objective function and constraints)
	\begin{eqnarray*}
		& & \quad \max_{\{p_i, q_i\}_{i\in \mathcal{I}}} \quad 
		\min_{\tiny \begin{matrix}
				G_i \in \mathcal{G}_{i}, \forall i\in \mathcal{I}\\
				\mathbb{E}_{G_i} (s_i) =\mu_i, \forall i\in \mathcal{I}
		\end{matrix}} \\
		& \le &  \min_{\tiny \begin{matrix}
				G_i \in \mathcal{G}_{i}, \forall i\in \mathcal{I}\\
				\mathbb{E}_{G_i} (s_i) =\mu_i, \forall i\in \mathcal{I}
		\end{matrix}}
		\quad \quad \max_{\{p_i,q_i\}_{i\in \mathcal{I}}} \quad \text{ (Max-min inequality)}\\
		& \le &  \quad \min_{G_i \in \mathcal{G}_{F_i}, \forall i\in \mathcal{I}}\quad \quad 
		\max_{\{p_i,q_i\}_{i\in \mathcal{I}}} \quad (\mathcal{G}_{F_i} \subseteq \left\lbrace G_i \in \mathcal{G}_i: \mathbb{E}_{G_i} (s_i) =\mu_i \right\rbrace),
	\end{eqnarray*}
	the optimized value of problem (\ref{eq:pwm}) provides an upper bound for the  optimized value of problem (\ref{eq:robustm}). As a result, if a robust mechanism can achieve this upper bound, then it has to be an optimal robust mechanism.

	Our formal result deals with ambiguity sets that are more general than mean-constrained $\mathcal{G}_i$'s. More precisely, for each $i$, we consider an arbitrary ambiguity set $\tilde{\mathcal{G}}_i$ of signal distributions on $[\ubar{t}_i, \bar{t}_i]$ such that
	\begin{enumerate}[leftmargin=0.7\parindent]
		\item [(1)] $\tilde{\mathcal{G}}_i$ contains the null information, i.e., $\delta\left(\mu_i\right)  \in \tilde{\mathcal{G}}_i$, and
		\item [(2)] distributions in $\tilde{\mathcal{G}}_i$ have the same mean, i.e., $\mathbb{E}_{G_i}(s_i) = \mu_i$ for all $G_i \in \tilde{\mathcal{G}}_i$.
	\end{enumerate}
	The two statements of the following proposition are analogous to Propositions \ref{prop:principal-worst} and Proposition \ref{prop:robustresult}, respectively.
	
	\begin{prop}
		\label{prop:robustmultiple}
		The following two statements are true:
		\begin{enumerate}[leftmargin = 0.5\parindent]
			
			\item The principal's payoff under the principal-worst information is 
			\begin{equation}
				\label{eq:robustpayoff} 
				Y^{M} := \max\left\lbrace R, \mathbb{E}_{F_1}(t_1),\dots, \mathbb{E}_{F_I}(t_{I}) \right\rbrace. \tag{Y-M}
			\end{equation}
			Particularly, null information for every agent is principal-worst.
			
			\item For any profile of ambiguity sets $\left\lbrace \tilde{\mathcal{G}}_i \right\rbrace_{i\in \mathcal{I}}$, where each $ \tilde{\mathcal{G}}_i$ satisfies (1) and (2), the robust mechanism that allocates the good to the agent (possibly the principal) who has the highest expected value is optimal within robust mechanisms.
		\end{enumerate}
	\end{prop}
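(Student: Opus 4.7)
The plan is to establish both statements simultaneously by examining two benchmark configurations that sandwich the value $Y^M$: (a) null information paired with the benchmark optimal mechanism from Proposition \ref{prop:fam}, and (b) arbitrary admissible signal profiles paired with the ``highest-expected-value'' robust mechanism described in Part 2. Each half supplies the bound that closes the other.

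For (a), I would apply the critical-type equation (\ref{eq:t}) to each degenerate $G_i = \delta(\mu_i)$ to obtain $s_i^* = \mu_i + c_i$, so $s_i^* - c_i = \mu_i$ for every agent (and $s_0^* = R$ for the principal). Invoking Proposition \ref{prop:fam}, a favored ``agent'' $i^* \in \arg\max_{i \in \hat{\cI}} \mu_i$ (with $\mu_0 := R$) is chosen with threshold $v^* = \max_i \mu_i$. Since realized signals coincide with their means deterministically, no rival triggers the threshold, so agent $i^*$ simply receives the good (or the principal retains it when $i^* = 0$), and the principal's payoff equals $Y^M = \max\{R, \mu_1, \dots, \mu_I\}$.

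For (b), the proposed robust mechanism allocates to $i^* \in \arg\max_{i \in \hat{\cI}} \mu_i$ with probability one and never checks. Because each $p_i$ is a constant function of reports and $q_i \equiv 0$, feasibility (\ref{eq:feasibilitymultiple}) and incentive compatibility (\ref{eq:icmultiple}) hold trivially. For every signal profile $\mathbf{G}$ with $\mathbb{E}_{G_i}(s_i) = \mu_i$ for all $i$---in particular every $\mathbf{G} \in \prod_i \mathcal{G}_{F_i}$ and every $\mathbf{G} \in \prod_i \tilde{\mathcal{G}}_i$ satisfying condition (2)---the principal's expected payoff from this mechanism is exactly $Y^M$. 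I would then close Part 1 by observing that this robust mechanism is itself a feasible mechanism in (\ref{eq:pwm}), so the optimal mechanism for any $\mathbf{G} \in \prod_i \mathcal{G}_{F_i}$ yields at least $Y^M$; combined with the exact attainment by null information in (a), null information solves (\ref{eq:pwm}) with value $Y^M$. For Part 2, because condition (1) places $\delta(\mu_i)$ in each $\tilde{\mathcal{G}}_i$, any candidate mechanism $(p,q)$ has its minimum expected payoff over $\prod_i \tilde{\mathcal{G}}_i$ bounded above by its payoff under null information, which by Proposition \ref{prop:fam} is at most the optimum $Y^M$ attained in (a); since the proposed mechanism achieves $Y^M$ uniformly by (b), it is optimal within robust mechanisms.

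The main obstacle I expect is articulating the interlinkage between (a) and (b) crisply, since each half supplies the bound going in the opposite direction the other needs: (b) lower-bounds the principal's payoff under any information and thereby identifies null information as worst, while (a) upper-bounds the min-payoff of any mechanism in the robust problem and thereby identifies the constant mechanism in (b) as optimal. A minor bookkeeping point is handling ties in $\arg\max_i \mu_i$---any measurable tie-breaking rule preserves $Y^M$---and verifying the edge case $R \ge \max_i \mu_i$ in (a), where the principal is the favored ``agent'' and trivially retains, yielding payoff $R = Y^M$.
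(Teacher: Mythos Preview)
Your proposal is correct and follows essentially the same sandwich argument as the paper: both establish $Y^{M}\ge Y^{PW}$ via the null-information profile and $Y^{rbst}\ge Y^{M}$ via the constant ``highest-expected-value'' mechanism, then chain these with $Y^{PW}\ge Y^{rbst}$ to collapse all three quantities. The only cosmetic differences are that the paper states the null payoff $Y^M$ without explicitly invoking Proposition~\ref{prop:fam} (whereas you compute $s_i^*=\mu_i+c_i$ and trace through the favored-agent mechanism), and the paper packages the upper bound on $Y^{rbst}$ as a one-line appeal to the max-min inequality rather than your direct ``evaluate at null'' argument; logically these are the same step.
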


	Proposition \ref{prop:robustmultiple} is even more general than it appears in at least three aspects. First, in the model of \cite{mylovanov2017optimal} which assumes costless ex post verification and limited punishment on lying agents or the model of \cite{li2021mechanism} which accommodates both costly verification and limited punishment, the mechanism in part 2 of Proposition \ref{prop:robustmultiple} is still an optimal robust mechanism\textemdash when the incentive issue is circumvented by the robust mechanism, how severe the punishment is or whether verification is costly becomes irrelevant.

	Second, the problem of allocating multiple homogeneous goods has a similar formulation and a similar solution as the single-good allocation problem; see \cite{ben2019mechanisms} and \cite{chua2019optimal}. Suppose there are $n$ goods to be allocated among $I$ agents. It is straightforward to see that the idea of Proposition \ref{prop:robustmultiple} applies to this multiple-good setting, which leads to an optimal robust mechanism that allocates the $n$ goods to the $n$ agents (including the principal) who have the $n$ highest expected type. 
	
	Third, one may consider the possible correlation among $G_{i}$'s. Roughly speaking, allowing for correlated information design does not affect Proposition \ref{prop:robustmultiple}. Note that, since independent information design is always feasible, allowing for correlated design weakly reduces the principal's payoff under the principal-worst information. Thus, we have a weakly smaller upper bound for the payoffs that can be achieved by robust mechanisms. However, the upper bound (\ref{eq:robustpayoff}) is indeed achievable by the robust mechanism in Proposition \ref{prop:robustmultiple}, which in turn implies that the principal-worst payoff cannot be strictly reduced. Therefore, both the principal-worst payoff and the payoff under the optimal robust mechanism are unchanged, i.e., Proposition \ref{prop:robustmultiple} still holds.\footnote{\ See \cite{he2022correlation} for a study of correlation robust mechanism design in the auction setting. Although the model differs drastically from ours, their idea of using max-min inequality is similar.}


	Finally, the following proposition extends Proposition \ref{prop:principal-optimal}.

	\begin{prop}
		\label{prop:principal-optimal-multiple}
		
		Full information is principal-optimal.
	\end{prop}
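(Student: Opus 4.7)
The plan is to prove a garbling claim: for every profile $\mathbf{G} = (G_1, \ldots, G_I) \in \prod_i \mathcal{G}_{F_i}$ and every feasible, IC mechanism $(p_i, q_i)_{i \in \mathcal{I}}$ under $\mathbf{G}$, there exists a feasible, IC mechanism $(\tilde{p}_i, \tilde{q}_i)_{i \in \mathcal{I}}$ under $\mathbf{F}$ delivering the same expected payoff to the principal. Since the principal can only do better by optimizing over all feasible IC mechanisms under $\mathbf{F}$, this will show that the value of (\ref{eq:mdmultiple}) at $\mathbf{F}$ dominates its value at any $\mathbf{G}$, proving that $\mathbf{F}$ is principal-optimal. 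The starting point is that each $G_i \in \mathcal{G}_{F_i}$ admits a Markov kernel $K_i(\cdot \vert t_i)$ such that drawing $t_i \sim F_i$ followed by $s_i \sim K_i(\cdot \vert t_i)$ yields the marginal $s_i \sim G_i$ together with the martingale identity $\mathbb{E}[t_i \vert s_i] = s_i$; independence across agents is preserved.

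I would then define the lifted mechanism under $\mathbf{F}$ as follows: each agent $i$ reports $\hat{t}_i$, the principal privately draws $\hat{s}_i \sim K_i(\cdot \vert \hat{t}_i)$ for every $i$, and implements $(p_i, q_i)$ on the generated profile $\hat{\mathbf{s}}$. Equivalently, $\tilde{p}_i(\mathbf{t}) := \mathbb{E}_{\mathbf{s} \vert \mathbf{t}}[p_i(\mathbf{s})]$ and $\tilde{q}_i(\mathbf{t}) := \mathbb{E}_{\mathbf{s} \vert \mathbf{t}}[q_i(\mathbf{s})]$. Linearity delivers the feasibility constraints (\ref{eq:feasibilitymultiple}) at once. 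For IC, abbreviate $P_i(s_i) := \mathbb{E}_{\mathbf{s}_{-i}}[p_i(s_i, \mathbf{s}_{-i})]$ and $Q_i(s_i) := \mathbb{E}_{\mathbf{s}_{-i}}[q_i(s_i, \mathbf{s}_{-i})]$; the original IC gives $P_i(s_i) \ge P_i(s_i') - Q_i(s_i')$ pointwise in $s_i, s_i'$. Averaging the left side over $s_i \vert t_i$ and the right side over $s_i' \vert \hat{t}_i$ produces the lifted IC (\ref{eq:icmultiple}) in the $t$-space.

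For payoff equivalence, the law of iterated expectations and independence across agents give
\begin{equation*}
\mathbb{E}_{\mathbf{t} \sim \mathbf{F}}\bigl[\tilde{p}_i(\mathbf{t})(t_i - R)\bigr] = \mathbb{E}_{\mathbf{s} \sim \mathbf{G}}\bigl[p_i(\mathbf{s})(\mathbb{E}[t_i \vert s_i] - R)\bigr] = \mathbb{E}_{\mathbf{s} \sim \mathbf{G}}\bigl[p_i(\mathbf{s})(s_i - R)\bigr],
\end{equation*}
using the MPC martingale property in the last step; the analogous identity holds for the checking-cost term. Summing across $i$ shows that the lifted mechanism under $\mathbf{F}$ yields exactly the same expected payoff as the original mechanism under $\mathbf{G}$, completing the argument. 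The delicate step is the IC verification, since the lifted mechanism checks the agent's \emph{type} $t_i$ rather than the simulated signal $\hat{s}_i$: one must exploit the fact that the original IC is pointwise in $s_i$, so that averaging the stronger inequality across the kernel preserves incentive compatibility regardless of the finer information (her actual type) that the agent acquires under $\mathbf{F}$.
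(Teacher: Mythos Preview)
Your argument is correct and takes a genuinely different route from the paper. The paper's proof leans on the explicit structure of the optimal favored-agent mechanism (Proposition~\ref{prop:fam}): it writes out the principal's payoff as a weighted average (equation~(\ref{eq:payofffam})), invokes Lemma~\ref{lem:sbiggerthant} to show that full information minimizes the threshold $s_i^*$ and hence the weight on the smaller term, and uses convexity of $\max_k\{s_k-c_k\}$ together with the MPC ordering to show that full information maximizes the larger term. Your approach is a Blackwell-style simulation argument that never touches the form of the optimal mechanism: you lift an arbitrary IC mechanism from $\mathbf{G}$ to $\mathbf{F}$ by having the principal internally garble reported types through the kernels $K_i$, and the martingale property $\mathbb{E}[t_i\mid s_i]=s_i$ delivers payoff equivalence. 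The IC step is handled correctly---because the original IC is a pointwise inequality in $(s_i,s_i')$, averaging each side against the respective kernel preserves it, even though the lifted mechanism verifies $t_i$ rather than the simulated $\hat s_i$.

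What each approach buys: yours is shorter, more conceptual, and manifestly robust to the precise shape of the principal's objective (anything linear in types would do); it also makes transparent that the result is really a Blackwell-informativeness statement and requires nothing specific about costly verification. The paper's approach, by working through the payoff formula, offers more structural insight into \emph{why} spreading information helps---namely that the allocation payoff is convex in the signal profile and that MPC raises the favored-agent threshold---and ties the result directly to the comparative statics developed earlier (Lemma~\ref{lem:sbiggerthant}).
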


	\section{Concluding remarks}

	\label{sec:conclusion}

	In this paper, we examine the information design problems in the allocation setting of \cite{ben2014optimal}, as well as their implications. Many of our insights in the single-agent case extend to the multiple-agent case. 
	
	There are two directions that deserve further investigation. The first one is to study the information acquisition game that we briefly discussed in Section \ref{sec:multipleAO},  though not in the scope of information design as modeled in the current paper. Second, from one agent to multiple agents, when studying principal-optimal information design, we assume independent design across agents. The discussion of possible correlation shall rely on the form of the principal's payoff under the optimal mechanism, which itself, to the best of our knowledge, is an open question.

	%
	%
	%
	
	\begin{spacing}{1.5}
		\bibliographystyle{aea}
		\bibliography{allocation}
	\end{spacing}

	\newpage
	
	\begin{appendices}
		
		\section{Proofs}
		\label{sec:proofs}

		
		\begin{proof}[Proof of Proposition \ref{prop:benchmark}]
			There are three steps. The first step simplifies the problem. The second step restricts our attention to a special class of mechanisms called threshold mechanisms. The third step finds the optimal mechanisms within the class of threshold mechanisms.
			
			\textbf{Step 1. Getting rid of $q(s)$ to simplify the problem.}
			
			From the incentive compatibility constraint (\ref{eq:ic}), we know that
			$$
			\inf_{s\in T} p(s) \ge p(s') - q(s')\quad \text{ for all } s, s'\in T.
			$$
			Let $\varphi := \inf_{s\in T} p(s)$. Then we have
			\begin{equation}
				\label{eq:phige}
				\varphi \ge p(s') - q(s')\quad \text{ for all } s'\in T.
			\end{equation}
			Since checking is costly, (\ref{eq:phige}) must hold with equality for all $s'\in T$. Therefore,
			\begin{equation}
				\label{eq:phieq}
				\varphi = p(s) - q(s)\quad \text{ for all } s\in T.
			\end{equation}
			Plug (\ref{eq:phieq}) into the objective function (\ref{md}) to eliminate $q(s)$. Then we have a simplified problem:
			\begin{align}
				\max_{p} \quad & \mathbb{E}_{G}  \left[ p(s) (s - R) - (p(s) - \varphi) c \right] =  \mathbb{E}_{G}  \left[ p(s) (s - c - R)  \right] + \varphi c \label{eq:sobjective}\\
				\text{ subject to }\quad
				& p(s)\in [0,1], \quad \forall s\in T, \label{eq:spf}\\
				&\varphi = \inf_{s\in T} p(s), \quad \forall s\in T.\label{eq:sic}
			\end{align}

			Problem \eqref{eq:sobjective}-\eqref{eq:sic} is equivalent to the following problem, where we refer to the inner maximization as the \textit{relaxed problem}:
			\begin{eqnarray}
				\max_{\varphi} \quad \max_{p} \quad & \mathbb{E}_{G}  \left[ p(s) (s - c - R)  \right] & + \varphi c \nonumber \\
				\text{ subject to }\quad
				& p(s)\in [0,1], \quad \forall s\in T, & \nonumber \\
				&p(s) \ge \varphi, \quad \forall s\in T.& \label{eq:ric}
			\end{eqnarray}
			In what follows, we study the solution to the relaxed problem first and then the original one. Due to the simplification above, we also call $p(\cdot)$ a mechanism.

			\textbf{Step 2. The solution to the relaxed problem is a \textit{threshold mechanism}.}
			
			\begin{definition}
				\label{def:tm}
				A mechanism $p$ is a \textbf{threshold mechanism} with parameter $\varphi$ if there exists a threshold $v^*$ such that the following two requirements hold for almost all $s\in T$:
				\begin{enumerate}[leftmargin = 1.5 \parindent]
					\item If $s -c < v^*$, then $p(s) = \varphi$.
					\item If $s -c \ge v^*$, then 
					$$
					p(s) = \begin{cases}
						1 	&	\text{ if } s -c \ge R,\\
						0	&	\text{ if } s -c < R.
					\end{cases}
					$$
				\end{enumerate}
			\end{definition}
			
			When $\varphi = 1$, the only feasible mechanism is $p(s) = 1$ for all $s\in T$, which has to be the solution of the relaxed problem. It is straightforward to see that this constant mechanism is a threshold mechanism with, say, $v^* = \bar{t} - c$.
			
			Suppose $\varphi < 1$. Let $p$ be an optimal mechanism in the sense that it solves the relaxed problem. We define a candidate threshold $v^*$:
			$$
			v^* := \inf \{s -c : \quad  p(s') = 1 \text{ for almost all } s' \ge s\}
			$$
			and proceed to argue that $p$ is a threshold mechanism with threshold $v^*$.
			
			We claim that $v^* \ge R$, so that $p$ satisfies the second requirement of threshold mechanism. If $v^* < R$, then for $s\in T$ such that $v^* \le s - c < R$, which has a strictly positive measure, we have $s - c - R < 0$. By the definition of $v^*$, $p(s) = 1$ for almost all $s\in T$ such that $v^* \le s - c < R$. Since the objective function is decreasing in $p$ when $s - c - R < 0$, the principal's payoff improves if $p$ is reduced. More precisely, we consider the following modification of $p$:
			$$
			p'(s) = \begin{cases}
				\varphi &	\text{ if } v^* \le s - c < R, \\
				p(s) 	&	\text{ otherwise.}
			\end{cases}
			$$
			Obviously, $p'$ is feasible. Since $\varphi < 1$, $p'$ delivers a strictly higher payoff to the principal than $p$, which is a contradiction to the optimality of $p$. Therefore, $v^* \ge R$.
			
			To see that $p$ also satisfies the first requirement of threshold mechanism, i.e. $p(s) = \varphi$ for almost all $s - c < v^*$, we consider two cases: (a) $v^* = R$ and (b) $v^* > R$.

			(a) If $v^* = R$, then $s - c < v^*$ means $s - c < R$. Thus, any mechanism with $\int_{\{s - c < v^* : p(s) > \varphi \}} \mathrm{d} G(s) > 0$ can be strictly improved by reducing $p(s)$ to $\varphi$ on the set $\{s - c \le v^* : p(s) > \varphi \}$, contradicting the optimality of $p$. Thus, in this case,  $p(s) = \varphi$ for almost all $s - c < v^*$. 
			
			\medskip
			
			(b) Now suppose $v^* > R$. And suppose to the contrary that there is a positive measure set $D \subseteq [\ubar{t},v^* + c)$ (equivalently, $s -c < v^*$ for all $s\in D$) such that $\varphi < p(s)$ for all $s\in D$. Without loss of generality, there exists an $\epsilon > 0$ such that $\varphi + \epsilon < p(s)$ for all $s\in D$. By the definition of $v^*$, for an arbitrarily small $\delta > 0$, particularly $\delta < v^* - R$, there exists a positive measure set $E \subseteq (v^*+c - \delta, v^* + c)$ such that $p(s) < 1$ for all $s\in E$. Without loss of generality, there exists a small $\gamma > 0$ such that $p(s) < 1 - \gamma $ for all $s\in E$. Without loss of generality, we assume that $D \cap E = \emptyset$ and $\max D < \min E$.\footnote{\ We can make this true by choosing a small enough $\delta$ and re-defining $D$ to exclude $E$.} Consider two numbers $\eta > 0$ and $\xi >0$ such that $\eta \mu(D) = \xi \mu(E)$, and consider the following mechanism:
			$$
			p'(s) = \begin{cases}
				p(s) - \eta &\text{ for all } s\in D, \\
				p(s) + \xi	& \text{ for all } s\in E, \\
				p(s)	&	\text{ otherwise.}
			\end{cases}
			$$
			We can choose $\eta $ and $\xi $ to satisfy $\eta,\xi < \min \{\epsilon, \gamma\}$. Then the new mechanism $p'$ is feasible. Since $\max D < \min E$, we have $s' - c - R < s - c - R $ for all $s\in E$ and all $s'\in D$. Therefore, $p'$ improves on $p$, which contradicts the optimality of $p$. Hence, $p(s) = \varphi$ for almost all $s - c < v^*$.

			\textbf{Step 3. The solution to the original problem.}
			
			Based on Step 2, the principal's objective function can be written as follows:
			\begin{align*}
				& \int_{\ubar{t}}^{\bar{t}} p(s) (s - c - R) \mathrm{d} G(s) + \varphi c \\
				= &	\int_{\ubar{t}}^{v^* + c} \varphi (s - c - R) \mathrm{d} G(s) + \int_{v^* + c }^{\bar{t}} (s - c - R) \mathrm{d} G(s) + \varphi c \quad \text{ (Threshold Mechanism) }	\\
				= & \varphi \left[ \int_{\ubar{t}}^{v^* + c}  (s - c - R) \mathrm{d} G(s) + c \right] + \int_{v^* + c }^{\bar{t}} (s - c - R) \mathrm{d} G(s). 
			\end{align*}
			We need to find the optimal $v^*$ and $\varphi$. Obviously, the optimal mechanism must satisfy
			$$
			\begin{cases}
				\varphi = 1 &	\text{ if } \int_{\ubar{t}}^{v^* + c}  (s - c - R) \mathrm{d} G(s) + c \ge 0, \\
				\varphi = 0 &	\text{ if } \int_{\ubar{t}}^{v^* + c}  (s - c - R) \mathrm{d} G(s) + c < 0.
			\end{cases}
			$$
			When $\varphi = 1$, the principal's objective function becomes $\int_{\ubar{t}}^{\bar{t}}(s - R) \mathrm{d} G(s)$ and the optimal $v^*$ does not matter (but there may be some restriction). When $\varphi = 0$, the principal's objective function becomes $\int_{v^* + c }^{\bar{t}} (s - c - R) \mathrm{d} G(s)$ and the optimal $v^* = R$.  
			
			The former mechanism is optimal if and only if 
			\begin{equation}
				\label{eq:opt2}
				\int_{\ubar{t}}^{\bar{t}}(s - R) \mathrm{d} G(s) \ge \int_{R + c }^{\bar{t}} (s - c - R) \mathrm{d} G(s),
			\end{equation}
			which is equivalent to saying that
			$$\mathbb{E}(s) \ge  \mathbb{E} (\max\{t, R + c\}) - c,$$
			which in turn is equivalent to saying that
			$$ s^* - c \ge R.$$
			One can easily verify that $v^* = s^* - c$ is an eligible threshold such that $\int_{\ubar{t}}^{v^* + c}  (s - c - R) \mathrm{d} G(s) + c \ge 0$ holds in this case (by checking (\ref{eq:opt2})).
			
			Similarly, the latter mechanism is optimal if and only if 
			\begin{equation}
				\label{eq:opt1}
				\int_{\ubar{t}}^{\bar{t}}(s - R) \mathrm{d} G(s) < \int_{R + c }^{\bar{t}} (s - c - R) \mathrm{d} G(s),
			\end{equation}
			which is equivalent to saying that
			$$\mathbb{E}(s) <  \mathbb{E} (\max\{t, R + c\}) - c,$$
			which in turn is equivalent to saying that
			$$ s^* - c < R,$$
			where $s^*$ is defined in (\ref{eq:t}). One can easily verify that $\int_{\ubar{t}}^{R + c}  (s - c - R) \mathrm{d} G(s) + c < 0$ holds in this case (by checking (\ref{eq:opt1})).

			It is straightforward to see that $\varphi = 1$, $v^* = s^* - c$ and $ s^* - c \ge R$ correspond to the first scenario in the statement of the proposition, and $\varphi = 0$, $v^* = R$ and $ s^* - c < R$ to the second. This completes the proof.
		\end{proof}

		
		\begin{proof}[Proof of Lemma \ref{lem:sbiggerthant}]
			Note that
			\begin{eqnarray*}
				\mathbb{E}_G(\max\{s,s^*\}) & = &  \mathbb{E}_G(s) + c \quad \text{ (Definition of $s^*$) }\\
				& = &  \mathbb{E}_F(t) + c \quad \text{ ($G\in \mathcal{G}_F$) }\\
				& = & \mathbb{E}_F(\max\{t,t^*\})  \quad \text{ (Definition of $t^*$) }\\
				& \ge & \mathbb{E}_G(\max\{s,t^*\}). \quad \text{ ($G$ is a MPC of $F$ and $\max\{\cdot,t^*\}$ is convex) }
			\end{eqnarray*}
			Suppose $s^* < t^*$ on the contrary. Since $\mathbb{E}_G(\max\{s, \cdot \})$ is weakly increasing, we have
			\begin{eqnarray*}
				\mathbb{E}_G(\max\{s,s^*\}) < \mathbb{E}_G(\max\{s,t^*\}),
			\end{eqnarray*}
			where the inequality holds strictly because $s^*$ is defined as the greatest threshold that achieves $S := \mathbb{E}_G(\max\{s,s^*\})$. Hence, we have a contradiction.
		\end{proof}


		\begin{proof}[Proof of Lemma \ref{lem:etsmall}]
			Suppose to the contrary that $s^* \ge R+c$. Then we would have
			\begin{eqnarray*}
				\mathbb{E}_{G}(s) & = & \mathbb{E}_{G}(\max\{s,s^*\})-c \quad \text{ (Definition of $s^*$) }\\
				& \geq & \mathbb{E}_{G}(\max\{s,R+c\})-c \quad \text{ ($\mathbb{E}_{G}(\max\{s,\cdot\})$ is weakly increasing) }\\
				& \geq & R+c-c\\
				& = & R.
			\end{eqnarray*}
			Since $G\in\mathcal{G}_F$, we have $\mathbb{E}_{G}(s) = \mu$. Therefore, $\mu \ge R$, a contradiction.
		\end{proof}

		\begin{proof}[Proof of Proposition \ref{prop:agent-optimal}]
			%
			Suppose $\mu\geq R$. Consider the degenerate signal distribution $G=\delta(\mu)$ which assigns probability one to the atom $s = \mu$; $G\in \mathcal{G}_F$. Then,
			$$\mathbb{E}_G (\max\{s, s^*\}) - c=\max\{\mathbb{E}_G (s),s^*\}-c.$$
			By definition of $s^*$, i.e. (\ref{eq:t}), we have
			$\mathbb{E}_G (s) = \max\left\lbrace \mathbb{E}_G (s),s^* \right\rbrace-c,$
			which implies $\mathbb{E}_G (s) = s^* - c$. Since $\mathbb{E}_G(s)=\mu$ for any $G\in \mathcal{G}_F$, we have $\mu = s^* - c$. Therefore, $s^* - c \ge R$ and, consequently, $G$ maximizes the agent's payoff.

			Now suppose $\mu < R$. 
			We prove the ``if'' part first.
			By Lemma \ref{lem:etsmall}, we know that for any $G\in\mathcal{G}_F$, $s^*<R+c$. Then for any $G\in\mathcal{G}_F$, we need to solve (\ref{eq:idao}), i.e. 
			$$
			\max_{G\in \mathcal{G}_F} \quad 1 - G^{-}(R + c).
			$$
			In other words, we need to make $G^{-}(R+c)$ as small as possible.  We claim that for any $G\in\mathcal{G}_F$,  $G^{-}(R+c)\geq F(s^{\dagger})$ or, equivalently, 
			$$
			1 - G^{-}(R+c) \le 1- F(s^{\dagger}).
			$$
			Then if some distribution $G \in \mathcal{G}_F$ could put probability mass $1-F(s^{\dagger})$ at the atom $R + c$, then it must be agent-optimal, which would complete the proof for the ``if'' part.
			
			\begin{cl}
				For any $G\in\mathcal{G}_F$,  $G^{-}(R+c)\geq F(s^{\dagger})$. 
			\end{cl}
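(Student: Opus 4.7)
The plan is to exploit the unbiased-signal representation of mean-preserving contractions. Since $G\in\mathcal{G}_F$, by Blackwell's theorem we may couple $(t,s)$ on a common probability space so that $t\sim F$, $s\sim G$, and $\mathbb{E}[t\mid s]=s$. This reformulation lets us convert the tail inequality $G^-(R+c)\ge F(s^{\dagger})$ into a statement about how much $F$-mass can be pushed above the threshold $s^{\dagger}$.

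Writing $q := 1 - G^-(R+c) = P(s\ge R+c)$, I would condition on the event $A := \{s\ge R+c\}$. The tower property together with $\mathbb{E}[t\mid s]=s$ gives $\mathbb{E}[t\mid A] = \mathbb{E}[s\mid A]\ge R+c$. Letting $\tilde F$ denote the conditional distribution of $t$ given $A$, the decomposition $F = q\tilde F + (1-q)\hat F$ exhibits $q\tilde F$ as a positive sub-measure dominated by $F$, with total mass $q$ and satisfying $\int t\, d\tilde F(t)\ge R+c$.

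The key intermediate lemma I then need is an extremal-rearrangement bound: among all sub-measures $\nu\le F$ with $\nu(T)=q$, the conditional mean $\nu(T)^{-1}\int t\, d\nu(t)$ is maximized by pushing $\nu$ as high on the support as possible, namely by $\nu^*=F|_{[F^{-1}(1-q),\bar t]}$. This yields $\mathbb{E}_{\tilde F}[t]\le \mathbb{E}_F[t\mid t\ge F^{-1}(1-q)]$. I would justify this by a standard swapping argument: whenever $\nu$ carries mass below the top-$q$ quantile while $F-\nu$ still has headroom above it, moving that mass upward strictly raises the integral without violating $\nu\le F$.

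Combining the two bounds, $\mathbb{E}_F[t\mid t\ge F^{-1}(1-q)]\ge R+c = \mathbb{E}_F[t\mid t\ge s^{\dagger}]$, where the equality is the definition of $s^{\dagger}$. Because $f>0$ on $T$, the map $x\mapsto \mathbb{E}_F[t\mid t\ge x]$ is strictly increasing, so $F^{-1}(1-q)\ge s^{\dagger}$, which rearranges to $G^-(R+c) = 1-q\ge F(s^{\dagger})$, as required. I expect the extremal-rearrangement step to be the main piece requiring care; everything else is a direct chain of the Blackwell coupling, the tower property, and strict monotonicity of the truncated mean.
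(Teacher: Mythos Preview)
Your argument is correct, but it takes a genuinely different route from the paper's own proof. The paper argues by contradiction using only the integral (CDF) characterization of mean-preserving contractions: it introduces the explicit distribution $\hat G$ (the one that pools all $F$-mass above $s^{\dagger}$ onto the atom $R+c$), and shows that if $G^{-}(R+c)<F(s^{\dagger})$ then $\int_{\ubar t}^{\bar t}G(s)\,\mathrm{d}s<\int_{\ubar t}^{\bar t}\hat G(s)\,\mathrm{d}s$, contradicting the equal-mean requirement for membership in $\mathcal{G}_F$. Your proof instead invokes the Blackwell coupling $\mathbb{E}[t\mid s]=s$, uses the tower property to bound the conditional mean of $t$ on $\{s\ge R+c\}$ by $R+c$, and then appeals to a Hardy--Littlewood style rearrangement to compare this with the top-$q$ truncated mean of $F$; strict monotonicity of $x\mapsto\mathbb{E}_F[t\mid t\ge x]$ (which does hold since $f>0$) closes the argument.

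Each approach has its merits. The paper's proof is entirely elementary---it never leaves the CDF domain and needs no auxiliary lemma---and it simultaneously exhibits $\hat G$, which is the very distribution used later to show the bound is attained. Your approach is more probabilistic and perhaps more transparent about \emph{why} the bound holds (no sub-measure of $F$ with mass $q$ can have mean above the top-$q$ quantile mean), and the rearrangement step is a reusable tool; the cost is that you import Blackwell's theorem and must handle the trivial case $q=0$ separately (where the conditional expectation is undefined but the conclusion $G^{-}(R+c)=1\ge F(s^{\dagger})$ is immediate).
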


			\begin{proof}
				We will use $\hat{G}$ in (\ref{eq:aoi1}) as an intermediate  distribution to obtain the claim. 
				
				Suppose to the contrary that for $G^{-}(R+c) < F(s^{\dagger})$ for some $G\in\mathcal{G}_F$. Since $G$ is a mean-preserving contraction of $F$, we know that for any $s'\in[\underline{t},\bar{t}]$, 
				\begin{eqnarray*}
					\int_{\underline{t}}^{s'}G(s)\mathrm{d}s & \leq & \int_{\underline{t}}^{s'}F(s)\mathrm{d}s  = \int_{\underline{t}}^{s'}\hat{G}(s)\mathrm{d}s.
				\end{eqnarray*}	 
				Particularly, when $s' = s^{\dagger}$, we have
				\begin{eqnarray*}
					\int_{\underline{t}}^{s^{\dagger}}G(s)\mathrm{d}s  & \le & \int_{\underline{t}}^{s^{\dagger}}\hat{G}(s)\mathrm{d}s.
				\end{eqnarray*}
				Since $G^{-}(R + c)< F(s^{\dagger})= \hat{G}(s^{\dagger})$, we know that for every $s\in (s^{\dagger},R+c)$,
				\begin{eqnarray*}
					G(s) & \le & G^{-}(R + c) \quad \text{ ($G$ is weakly increasing) }\\
					& < &  \hat{G}(s^{\dagger}) \\
					& \le & \hat{G}(s). \quad \text{ ($\hat{G}$ is weakly increasing) }
				\end{eqnarray*}
				Therefore,
				\begin{eqnarray*}
					\int_{\underline{t}}^{(R+c)^-}G(s)\mathrm{d}s < \int_{\underline{t}}^{(R+c)^-}\hat{G}(s)\mathrm{d}s.
				\end{eqnarray*}
				Finally, for every $s \in [R+c,\bar{t}]$, we have $G(s) \le 1 = \hat{G}(R+c)$ and thus
				\begin{eqnarray*}
					\int_{\underline{t}}^{\bar{t}}G(s)\mathrm{d}s<\int_{\underline{t}}^{\bar{t}}\hat{G}(s)\mathrm{d}s,
				\end{eqnarray*}
				which contradicts the fact that $G$ and $\hat{G}$ are both in $\mathcal{G}_F$. Hence, $G^{-}(R+c)\geq F(s^{\dagger})$.
			\end{proof}
			Obviously, the particular information $\hat{G} \in \mathcal{G}_F$ in (\ref{eq:aoi1}) puts probability mass $1-F(s^{\dagger})$ at the atom $R + c$, as desired.

			Now we prove the ``only-if'' part for the case of $\mu < R$. Since $\hat{G}$ solves (\ref{eq:idao}) and satisfies $\hat{G}^{-}(R + c) = F(s^{\dagger})$, any solution $G \in \mathcal{G}_F$ to (\ref{eq:idao}) must satisfy $G^{-}(R + c) = F(s^{\dagger})$ as well. We claim that 
			\begin{eqnarray}
				\int_{\ubar{t}}^{(R + c)^{-}} s \mathrm{d} G(s) \ge \int_{\ubar{t}}^{s^{\dagger}} s \mathrm{d} F(s). \label{eq:claim1}
			\end{eqnarray}
			Since $G\in \mathcal{G}_F$, we know that $\mathbb{E}_G(s) = \mathbb{E}_F(s)$. The claim would imply 
			\begin{eqnarray*}
				\int^{\bar{t}}_{R + c} s \mathrm{d} G(s) &\le& \int^{\bar{t}}_{s^{\dagger}} s \mathrm{d} F(s).
			\end{eqnarray*}
			Using $G^{-}(R + c) = F(s^{\dagger})$ again, we have
			\begin{eqnarray*}
				\frac{1}{1- G^{-}(R + c)}\int^{\bar{t}}_{R + c} s \mathrm{d} G(s) & \le & \frac{1}{1 - F(s^{\dagger})}\int^{\bar{t}}_{s^{\dagger}} s \mathrm{d} F(s)\\
				& = & R + c. \quad \text{ (The definition of $s^{\dagger}$) }
			\end{eqnarray*}
			This can occur only if $G$ puts all the probability mass $1- G^{-}(R + c) = 1 - F(s^{\dagger})$ on the atom $R + c$, in which case the inequality holds with equality. This would complete the ``only if'' part.  The rest of the proof verifies the claim (\ref{eq:claim1}):
			\begin{eqnarray*}
				\int_{\ubar{t}}^{(R + c)^{-}} s \mathrm{d} G(s) 
				& = & \left[(R + c) G^{-}(R + c) - 0\right] - \int_{\ubar{t}}^{(R + c)^{-}} G(s) \mathrm{d} s\\
				& = & \left[s^{\dagger} G^{-}(R + c) - 0\right] + \left[(R + c) - s^{\dagger}\right] G^{-}(R + c)  - \int_{\ubar{t}}^{(R + c)^{-}} G(s) \mathrm{d} s\\
				& = & \left[s^{\dagger} G^{-}(R + c) - 0\right] + \int_{(s^{\dagger})^{+}}^{(R + c)^{-}} G^{-}(R + c) \mathrm{d} s - \int_{\ubar{t}}^{(R + c)^{-}} G(s) \mathrm{d} s\\
				& \ge & \left[s^{\dagger} G^{-}(R + c) - 0\right] - \int_{\ubar{t}}^{s^{\dagger}} G(s) \mathrm{d} s \quad \text{ ($G$ is increasing) }\\
				& = & \left[s^{\dagger} F(s^{\dagger}) - 0\right] - \int_{\ubar{t}}^{s^{\dagger}} G(s) \mathrm{d} s \quad (G^{-}(R + c) = F(s^{\dagger}))\\
				& \ge & \left[s^{\dagger} F(s^{\dagger}) - 0\right] - \int_{\ubar{t}}^{s^{\dagger}} F(s) \mathrm{d} s \quad  \text{ ($G$ is a MPC of $F$) }\\
				& = & \int_{\ubar{t}}^{s^{\dagger}} s \mathrm{d} F(s).
			\end{eqnarray*}
			
			\vspace{-5mm}
		\end{proof}


		\begin{proof}[Proof of Proposition \ref{prop:principal-worst}]

			
			Suppose $\mu\geq R$. To examine (\ref{eq:y}), we first note that when $s^* - c = (resp. <) \  R$, we have
			\begin{eqnarray}
				\int_{R + c }^{\bar{s}} (s - c - R) \mathrm{d} G(s) & = (resp. >) & 
				\mu - R. \label{eq:2principal-worst}
			\end{eqnarray}
			Therefore, the principal's payoff is bounded from below by $\mu-R$, which is independent of the information $G$ to be designed.
			
			If an information $G \in \mathcal{G}_{F}$ leads to a threshold $s^*$ such that $s^* \ge R + c$, then the good is allocated to the agent without checking and the principal obtains the lower bound payoff $\mu-R$. Such a $G$ would be principal-worst. So the ``if'' part holds.  
			
			To show the ``only-if'' part, we shall first notice that for the degenerate distribution $G = \delta(\mu)$, we have $s^* = \mu + c \ge R + c$. Therefore, it is a principal-worst information. If any other information $G \in \mathcal{G}_{F}$ is principal-worst, then it has to deliver the lower bound payoff $\mu-R$ to the principal. By (\ref{eq:2principal-worst}), the principal's payoff is strictly higher than $\mu-R$ whenever $s^* - c < R$. Therefore, a principal-worst information must admit a threshold $s^*$ such that $s^* \ge R + c$.
			
			Now suppose $\mu<R$. We have seen in Lemma \ref{lem:etsmall} that for any $G\in\mathcal{G}_F$, we have $s^*<R+c$. Therefore, according to (\ref{eq:y}), the payoff applicable here is $\int_{R + c }^{\bar{s}} (s - c - R) \mathrm{d} G(s)$. It is straightforward to see that
			\begin{equation}
				\label{eq:3principal}
				\int_{R + c }^{\bar{s}} (s - c - R) \mathrm{d} G(s) \ge 0.
			\end{equation}
			Since the degenerate distribution $G = \delta(\mu)$ is in $\mathcal{G}_F$ and $\mu < R + c$, the lower bound payoff of zero is attainable by $G$ and any information which attains the lower bound is principal-worst. Obviously, (\ref{eq:3principal}) holds with equality if and only if $G(R+c) = 1$.
		\end{proof}

		\begin{proof}
			[Proof of Proposition \ref{prop:agentprincipal}]
			The case of $\mu \ge R$ is trivial. When $\mu < R$, we claim that any agent-optimal information $G\in \mathcal{G}_F$ must satisfy
			$
			1 - G(R + c) = 0.
			$
			Then Proposition \ref{prop:principal-worst} would imply that $G$ is also principal-worst.
			
			Suppose to the contrary that $1 - G(R + c)  > 0$. Then 
			\begin{eqnarray*}
				1 - G^{-}(R + c) & = & g(R + c) + \left[ 1- G(R + c) \right]\\
				& = & 1 - F(s^{\dagger}) + \left[ 1- G(R + c) \right] \quad \text{ (Proposition \ref{prop:agent-optimal}) } \\
				& > & 1 - F(s^{\dagger})\\
				& = & 1 - \hat{G}^{-}(R + c). \quad \text{ (\ref{eq:aoi1}) }
			\end{eqnarray*}
			Therefore, the agent's payoff is strictly higher under $G$ than under $\hat{G}$, which is a contradiction to the optimality of $\hat{G}$. This completes the proof.
		\end{proof}

		\begin{proof}[Proof of Proposition \ref{prop:robustresult}]
			It suffices to show that the mechanism in the proposition can (at least) achieve its upper bound (\ref{eq:ypw}), i.e. the principal-worst payoff in Proposition \ref{prop:principal-worst}.

			When $ \mu  \ge R$, then the agent receives the good without being checked and the principal gets $\mu-R$, which is exactly the principal-worst payoff  in the case of $ \mu  \ge R$.
			
			When $\mu < R$, the principal-worst payoff is zero. We proceed to show that the mechanism in the proposition achieves the payoff zero. For any $G \in \mathcal{G}_F$ and $t\in T$, if  $t - c <R$, then the principal retains the good and she gets payoff zero (excluding the reservation value); if $t - c \geq R$,  then the agent is checked and allocated the good and the principal gets $t-c-R\geq0$ which is weakly positive. Thereby, for any $t\in T$, the principal gets (at least) nonnegative payoff. But the principal's payoff is bounded above by zero. Hence, the mechanism in the proposition must achieve the payoff zero. This completes the proof.	
		\end{proof}

		\begin{proof}[Proof of Proposition \ref{prop:principal-optimal}]
			If $t^* - c \ge R$, then we know from Lemma \ref{lem:sbiggerthant} that $s^* - c \ge R$. In this case, the principal's payoff is $\mu - R$, independent of information design.
			
			Now suppose $t^* - c < R$. The principal's payoff is again independent of the specific information $G$ as long as the induced $s^*$ satisfies $s^* - c \ge R$, which is $\mu - R$. It remains to search for the principal-optimal information within 
			$$
			\{G\in \mathcal{G}_{F}: s^* - c < R\},
			$$
			where the principal's payoff is 
			\begin{equation*}
				\int_{R + c }^{\bar{t}} (s - c - R) \mathrm{d} G(s).
			\end{equation*}
			In what follows, we first find out one principal-optimal information, and then the characterization follows immediately. 
			
			Consider the relaxed problem where the constraint $s^* - c < R$ is dropped:
			\begin{equation*}
				\max_{G\in \mathcal{G}_{F}} \quad \int_{R + c }^{\bar{t}} (s - c - R) \mathrm{d} G(s)
			\end{equation*}
			First, for any $G \in \mathcal{G}_{F}$, the support of $G$ is a subset of $[\ubar{t},\bar{t}]$. Without loss of generality, we assume $G$ is defined on $[\ubar{t},\bar{t}]$.
			Note that 
			$$
			\int_{R + c }^{\bar{t}} (s - c - R) \mathrm{d} G(s) = \int_{\ubar{t}}^{\bar{t}} \max \{0, s - c - R\} \mathrm{d} G(s).
			$$
			Since $G$ is a MPC of $F$ and $\max\{0,\cdot\}$ is a convex function, we know that 
			\begin{equation}
				\label{eq:principal-optimal1}
				\int_{\ubar{t}}^{\bar{t}} \max \{0, s - c - R\} \mathrm{d} F(s) \ge \int_{\ubar{t}}^{\bar{t}} \max \{0, s - c - R\} \mathrm{d} G(s).
			\end{equation}
			That is, $F\in \mathcal{G}_{F}$ solves the relaxed problem. Since $t^* - c < R$, the constraint is also satisfied. Hence, $F$ itself is principal-optimal within $\{G\in \mathcal{G}_{F}: s^* - c < R\}$.
			
			Overall, when $t^* - c < R$, the principal's maximal payoff under designed information is 
			$$
			\max\left\lbrace \mu - R, \int_{R + c }^{\bar{t}} (t - c - R) \mathrm{d} F(t)\right\rbrace.
			$$
			By (\ref{eq:payoffcomparison}), we know that the latter is greater, which means that $F$ is principal-optimal when $t^* - c < R$.

			Now we turn to the characterization of the principal-optimal information.
			Since $\max \{0, s - c - R\}$ is piecewise linear in $s$ and has a unique kink at $s = R + c$, any non-trivial mean-preserving contraction of $F$ across $R + c$ makes (\ref{eq:principal-optimal1}) strict. Therefore, for any principal-optimal information $G \in \mathcal{G}_F$, we must have $G^{-}(R + c) = F(R + c)$. In this case, by rewriting the principal's payoff as	
			\begin{equation*}
				\int_{R + c }^{\bar{t}} (s - c - R) \mathrm{d} G(s) 
				= G^{-}(R + c)(R + c) + [ 1 - G^{-}(R + c)] \mathbb{E}_{G}(s \big\vert [R + c, \bar{t}]) - (R + c),
			\end{equation*}
			we know that an information $G \in \mathcal{G}_F$ such that $G^{-}(R + c) = F(R + c)$ is principal-optimal if and only if
			\begin{eqnarray*}
				\mathbb{E}_{G}(s \big\vert [R + c, \bar{t}]) &=& \mathbb{E}_{F}(s \big\vert [R + c, \bar{t}]). 
			\end{eqnarray*}
			This completes the characterization and thus the entire proof.
		\end{proof}


		\begin{proof}
			[Proof of Proposition \ref{prop:robustmultiple}]
			The two statements are proved simultaneously, without figuring out a principal-worst information first. 
			
			First, the principal's payoff under the principal-worst information, denoted by $Y^{PW}$, is weakly worse than the payoff under the ``uniform-null'' information distributions (one feasible choice of information design since every $\tilde{\mathcal{G}}_i$ contains the null information $\delta(\mathbb{E}_{F_i}(t_i))$). That is,
			\begin{equation*}
				\max\left\lbrace R, \mathbb{E}_{F_1}(t_1),\dots, \mathbb{E}_{F_I}(t_{I}) \right\rbrace \ge Y^{PW}.
			\end{equation*}
			
			Second, by the max-min inequality, the principal's worst-case payoff is an upper bound for her payoff from the optimal robust mechanism, whatever it is. Therefore, we have $Y^{PW} \ge Y^{rbst}$, which implies that 
			\begin{equation*}
				\max\left\lbrace R, \mathbb{E}_{F_1}(t_1),\dots, \mathbb{E}_{F_I}(t_{I}) \right\rbrace  \ge Y^{rbst}.
			\end{equation*}
			
			Finally, since the mechanism that allocates the good to the agent (possibly the principal) who has the highest expected value is robust and it can achieve the upper bound payoff (\ref{eq:robustpayoff}), we know that such a mechanism is optimal among robust mechanisms. As a byproduct, (\ref{eq:robustpayoff}) is indeed the principal's payoff under the principal-worst information. 
		\end{proof}

		\begin{proof}
			[Proof of Proposition \ref{prop:principal-optimal-multiple}]
			We need to consider two cases: (1) $\max_{i \in \cI} \left\lbrace t_i^* - c_i \right\rbrace \ge R$ and 
			(2) $\max_{i \in \cI} \left\lbrace t_i^* - c_i \right\rbrace < R$.

			If $\max_{i \in \cI} \left\lbrace t_i^* - c_i \right\rbrace \ge R$, then by Lemma \ref{lem:sbiggerthant}, we must have $\max_{i \in \cI} \left\lbrace s_i^* - c_i \right\rbrace \ge R$ for any profile of designed signal distributions. According to Proposition \ref{prop:fam}, the optimal mechanism admits a favored-agent $i \in \cI$ and has a threshold $v^* = s_i^* - c_i$. In this case, the principal's reservation value is irrelevant in the analysis. The rules of the favored-agent mechanism implies that the principal's payoff in this case is
			\begin{eqnarray}
				&&	\int_{\mathbf{s}: \max_{j\ne i} \left\lbrace s_j - c_j \right\rbrace \le v^*} s_i \mathrm{d} \mathbf{G}(\mathbf{s}) 
				+ \int_{\mathbf{s}: \max_{j\ne i} \left\lbrace s_j - c_j \right\rbrace > v^*} \max_{k \in \mathcal{I}} \left\lbrace s_k - c_k \right\rbrace \mathrm{d} \mathbf{G}(\mathbf{s}) \label{eq:payofffam1}  \\
				& = & \mathbb{E}_{\mathbf{G}} \left( s_i \bigg \vert \max_{j\ne i} \left\lbrace s_j - c_j \right\rbrace \le v^* \right) \prod_{j\ne i} G_{j} \left( v^* + c_j\right)  \nonumber\\
				& & + \ 
				\mathbb{E}_{\mathbf{G}} \left( \max_{k\in \cI} \left\lbrace s_k - c_k \right\rbrace  \bigg \vert \max_{j\ne i} \left\lbrace s_j - c_j \right\rbrace > v^* \right) \left[ 1 - \prod_{j\ne i} G_{j} (v^* + c_j) \right].
				\label{eq:payofffam}
			\end{eqnarray}
			Information design affects (\ref{eq:payofffam}), as a weighted average, in two aspects: the weights and the two weighted terms. We first compare the two weighted terms:

			\begin{cl}
				\label{cl:com}
				$$\mathbb{E}_{\mathbf{G}} \left( s_i \bigg \vert \max_{j\ne i} \left\lbrace s_j - c_j \right\rbrace \le v^* \right)
				\le 
				\mathbb{E}_{\mathbf{G}} \left( \max_{k\in \cI} \left\lbrace s_k - c_k \right\rbrace  \bigg \vert \max_{j\ne i} \left\lbrace s_j - c_j \right\rbrace > v^* \right)
				$$
			\end{cl}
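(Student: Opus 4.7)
The plan is to use independence of signals across agents together with the defining equation $\mathbb{E}_{G_i}(\max\{s_i, s_i^*\}) = \mathbb{E}_{G_i}(s_i) + c_i$ for the threshold $s_i^*$ of the favored agent.

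First, I would observe that since the designed signal distributions are independent across agents and the conditioning event $\{\max_{j\ne i}(s_j - c_j) \le v^*\}$ depends only on $\mathbf{s}_{-i}$, the left-hand side collapses to the unconditional mean of $s_i$:
\begin{equation*}
\mathbb{E}_{\mathbf{G}}\left(s_i \,\bigg\vert\, \max_{j\ne i}\{s_j - c_j\} \le v^*\right) = \mathbb{E}_{G_i}(s_i) =: \mu_i.
\end{equation*}
So it suffices to show that the right-hand side is bounded below by $\mu_i$.

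Next, let $M := \max_{j\ne i, j\in \cI}\{s_j - c_j\}$, which is independent of $s_i$. On the event $\{M > v^*\}$, the overall maximum satisfies $\max_{k\in \cI}\{s_k - c_k\} = \max\{s_i - c_i, M\}$. Conditioning on $M$ first and using independence,
\begin{equation*}
\mathbb{E}_{\mathbf{G}}\!\left(\max_{k\in\cI}\{s_k-c_k\} \,\bigg\vert\, M > v^*\right) = \mathbb{E}\!\left[\,\mathbb{E}_{G_i}\!\bigl(\max\{s_i - c_i, M\}\bigr) \,\bigg\vert\, M > v^*\right].
\end{equation*}
The main step then is to prove the pointwise bound $\mathbb{E}_{G_i}(\max\{s_i - c_i, m\}) \ge \mu_i$ for every $m > v^*$. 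Because $v^* = s_i^* - c_i$, any such $m$ satisfies $m + c_i > s_i^*$, hence $\max\{s_i, m + c_i\} \ge \max\{s_i, s_i^*\}$ pointwise. Taking expectations and invoking the defining equation (\ref{eq:t}) for $s_i^*$ applied to $G_i$,
\begin{equation*}
\mathbb{E}_{G_i}\!\bigl(\max\{s_i, m+c_i\}\bigr) \ \ge\ \mathbb{E}_{G_i}\!\bigl(\max\{s_i, s_i^*\}\bigr) \ =\ \mu_i + c_i,
\end{equation*}
which rearranges to $\mathbb{E}_{G_i}(\max\{s_i - c_i, m\}) \ge \mu_i$. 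Plugging this back in and taking the outer conditional expectation in $M$ yields right-hand side $\ge \mu_i = $ left-hand side, completing the proof.

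The only genuinely delicate point is the pointwise bound in the third step; everything else is bookkeeping with independence and the tower property. The role of $i$ being a favored agent enters precisely through the identity $v^* + c_i = s_i^*$, which is what unlocks the defining equation for the threshold. If the claim were stated for a non-favored agent's threshold, the same argument would fail because $m + c_i > v^* + c_i$ would no longer exceed $s_i^*$.
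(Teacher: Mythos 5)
Your proof is correct and follows essentially the same route as the paper's: both reduce the left-hand side via independence to $\mathbb{E}_{G_i}(s_i)$, invoke the threshold identity $v^* = s_i^* - c_i$ together with the defining equation (\ref{eq:t}) to rewrite this as $\mathbb{E}_{G_i}(\max\{s_i - c_i, v^*\})$, and then use the pointwise monotonicity of $\max$ in the second argument to dominate it by $\mathbb{E}_{\mathbf G}(\max_k\{s_k - c_k\} \mid M > v^*)$. The only cosmetic difference is that you condition on $M = m > v^*$ first and apply the tower property, whereas the paper runs a sample-by-sample coupling of $(s_i, \mathbf s_{-i}')$ against $(s_i, \mathbf s_{-i}'')$; the underlying estimate is identical.
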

			
			\begin{proof}
				Rewrite the first term of (\ref{eq:payofffam1}) as follows: 
				\begin{eqnarray*}
					&& \int_{\mathbf{s}: \max_{j\ne i} \left\lbrace s_j - c_j \right\rbrace \le v^*} s_i \mathrm{d} \mathbf{G}(\mathbf{s}) \\
					& = & \int_{\mathbf{s}_{-i}: \max_{j\ne i} \left\lbrace s_j - c_j \right\rbrace \le v^*} \int_{s_i} s_i \mathrm{d} G_i(s_i) \mathrm{d} \mathbf{G}_{-i}(\mathbf{s}_{-i})\\
					& = & \int_{\mathbf{s}_{-i}: \max_{j\ne i} \left\lbrace s_j - c_j \right\rbrace \le v^*} \mathbb{E}_{G_{i}} \left(s_i\right) \mathrm{d} \mathbf{G}_{-i}(\mathbf{s}_{-i})\\
					& = & \int_{\mathbf{s}_{-i}: \max_{j\ne i} \left\lbrace s_j - c_j \right\rbrace \le v^*} \mathbb{E}_{G_{i}} \left(\max\left\lbrace s_i - c_i, v^* \right\rbrace \right) \mathrm{d} \mathbf{G}_{-i}(\mathbf{s}_{-i}) \quad \text{ (Definition of $s_i^*$) }.
				\end{eqnarray*}
				Then, we know that
				$$\mathbb{E}_{\mathbf{G}} \left( s_i \bigg \vert \max_{j\ne i} \left\lbrace s_j - c_j \right\rbrace \le v^* \right) 
				=
				\mathbb{E}_{\mathbf{G}} \left( \max\left\lbrace s_i - c_i, v^* \right\rbrace \bigg \vert \max_{j\ne i} \left\lbrace s_j - c_j \right\rbrace \le v^* \right).$$
				Note that for any $s_i$, any $\mathbf{s}' = (s_i, \mathbf{s}_{-i}') \in \left\lbrace \mathbf{s}: \max_{j\ne i} \left\lbrace s_j - c_j \right\rbrace \le v^* \right\rbrace$ and any $\mathbf{s}'' = (s_i, \mathbf{s}_{-i}'') \in \left\lbrace \mathbf{s}: \max_{j\ne i} \left\lbrace s_j - c_j \right\rbrace > v^* \right\rbrace$, we must have
				$$
				\max\left\lbrace s_i - c_i, v^* \right\rbrace \le \max\left\lbrace s_i - c_i, \max_{j\ne i} \left\lbrace s_j'' - c_j\right\rbrace \right\rbrace  = \max_{k\in \cI} \left\lbrace s_k'' - c_k \right\rbrace,
				$$
				which implies that 
				\begin{equation*}
					\mathbb{E}_{\mathbf{G}} \left( \max\left\lbrace s_i - c_i, v^* \right\rbrace  \bigg \vert \max_{j\ne i} \left\lbrace s_j - c_j \right\rbrace \le v^* \right) \\
					\le
					\mathbb{E}_{\mathbf{G}} \left( \max_{k\in \cI} \left\lbrace s_k - c_k \right\rbrace  \bigg \vert \max_{j\ne i} \left\lbrace s_j - c_j \right\rbrace > v^* \right).
				\end{equation*}
				The claim follows immediately.
			\end{proof}

			We have two observations. First, since the weight for the smaller term in (\ref{eq:payofffam}), i.e., $\prod_{j\ne i} G_{j} \left( v^* + c_j\right) $, is increasing in $v^*$/$s_i^*$, Claim \ref{cl:com} implies that the principal prefers a smaller $s_i^*$ to a greater one.

			The second observation is on the two weighted terms in (\ref{eq:payofffam}). The first weighted term is simply $\mathbb{E}_{G_i}(s_i) = \mathbb{E}_{F_i}(t_i)$, which is independent of information design. The second weighted term is the conditional expectation of $\max_{k\in \cI} \left\lbrace s_k - c_k \right\rbrace$, which is obviously a convex function of $\mathbf{s}$. Therefore, the second weighted term is decreasing in each $G_j$ in the second-order stochastic dominance sense, i.e., mean-preserving spread increases its value.
			
			Overall, full information for every $j\in \cI$, i.e., $G_j = F_j$, would (i) minimize the weight for the smaller weighted term in (\ref{eq:payofffam}) by minimizing $s_i^*$ (Lemma \ref{lem:sbiggerthant}) and (ii) maximize the larger weighted term. Hence, such a design is principal-optimal.

			Now we consider the case with $\max_{i \in \cI} \left\lbrace t_i^* - c_i \right\rbrace < R$. If for some agent $i\in \cI$, the designed signal distribution $G_i$ induces an $s_i^*$ such that $s_i^* - c_i \ge R$, then there will be a favored-agent and the principal's payoff takes the form of (\ref{eq:payofffam}). 
			However, since the principal prefers smaller $s_i^*$ to a greater one and prefers mean-preserving spread to mean-preserving contraction, making $s_i^* - c_i \ge R$ (through mean-preserving contraction) is never optimal.
			In contrast, if for all $i\in \cI$, the designed signal distribution induces an $s_i^*$ such that $s_i^* - c_i < R$, then the principal's payoff is
			\begin{equation*}
				R \prod_{i\in \cI} G_i(R + c) + \mathbb{E}_{\mathbf{G}} \left[ \max_{k\in \cI} \left\lbrace s_k - c_k \right\rbrace \bigg \vert \max_{k\in \cI} \left\lbrace s_k - c_k \right\rbrace > R \right] \left[ 1-\prod_{i\in \cI} G_i(R + c)\right].	
			\end{equation*}
			By similar reasons as for (\ref{eq:payofffam}), full information for every agent is principal-optimal. This completes the proof for the second case and thus the entire proposition.
		\end{proof}

		\section{Pareto agent-optimal information}
		\label{sec:appPareto}

		\begin{example}[Pareto agent-optimal information needs not be aggregate agent-optimal]
			
			\emph{
				Consider an example with  $c_i=c$ and for all $i$, $\mu_i<R$.  Consider the following information structure $\{G_i\}_{i\in\mathcal{I}}$ that for some agent $i$, $G_i=\hat{G}_i$ defined in Proposition 8 (Section 6.1) and $G_i=\delta(\mu_j)$ for any $j\neq i$. Under $\{G_i\}_{i\in\mathcal{I}}$, 
				\begin{align*}
					p_i=1-F_i (s_i^{\dagger} ), \text{ and } p_j=0\quad \forall j\neq i.
				\end{align*}
				First, there is no other information structure $\{G_i'\}_{i\in\mathcal{I}}$ such that $p_i'>p_i$, since the good will be allocated to agent $i$ only if $s_i\geq R+c$.  Thus suppose there is a Pareto-improvement information structure  $\{G_i'\}_{i\in\mathcal{I}}$. There must be at least one $j\neq i$ such that $p_j>0$, therefore for agent $j$,
				\begin{align*}
					Prob\{s_j\geq R+c\}>0.
				\end{align*}
				In this case, if we choose the equal tie-breaking rule, the probability of getting the good for agent $i$ is
				\begin{align*}
					p_i'= &\frac{1}{2}(1-F_i (s_i^{\dagger} ))	Prob\{s_j= R+c\}+(1-F_i (s_i^{\dagger} ))	\left(1-	Prob\{s_j\geq R+c\}\right)\\
					=&(1-F_i (s_i^{\dagger} ))\left(\frac{1}{2}Prob\{s_j= R+c\}+1-Prob\{s_j\geq R+c\}\right)\\
					\leq &(1-F_i (s_i^{\dagger} ))\left(+1-\frac{1}{2}Prob\{s_j\geq R+c\}\right)<(1-F_i (s_i^{\dagger} ))=p_i.
				\end{align*}
				Therefore, if we strictly benefit  some agents, we must strictly hurt some other agents. Hence a Pareto improvement is impossible given $\{G_i\}_{i\in\mathcal{I}}$.
			}
		\end{example}
		
		%
		%
	%

			\section{Agents' information acquisition game}
		\label{sec:appendixAO}
		A principal is to decide whether or not to allocate an indivisible good to multiple agents.
		We study the context in which agents
		can   strategically engage in information acquisition. That is, each
	agent $i$ does not know the realization of $t_i$  \emph{ $\acute{a}$ priori}  but can choose to acquire information
		about $t_i$. We assume that agents cannot communicate and can only choose their
		own signal structure.  Formally, each agent $i$ can independently choose a signal  distribution $G_i\in\mathcal{G}_i$. After seeing the agents' information acquisition behavior, the principal  will implement the optimal favored-agent mechanism with respect to the action profile $\mathbf{G}$ (Proposition \ref{prop:fam}).\footnote{\ \cite{yang2019buyer,yang2021efficient} study such a similar information acquisition game among agents in the auction model. }  An agent's utility is simply the probability of receiving the good after the optimal mechanism is executed. 
		
		In what follows, we provide an asymptotically symmetric Nash equilibrium of the game at the information acquisition stage, which describes the strategic behaviors of the agents when acquiring information.
		\begin{prop}
			\label{prop:acquisition}
			Given any $F$, the strategy profile where every agent chooses $G_i=F\in\mathcal{G}_F$, i.e., fully revelation, forms a symmetric Nash equilibrium in the information acquisition game as $n\rightarrow\infty$.
		\end{prop}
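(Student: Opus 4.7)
The approach is to show that both the symmetric full-revelation payoff and every unilateral deviation payoff converge to zero as $n\to\infty$, so that in the limit no deviation is strictly profitable. The main lever is the MPC constraint $G_i\in\mathcal{G}_{F}$ together with the continuity of $F$ at $\bar{t}$, which jointly forbid a deviator from placing atoms at the upper endpoint and thus exploiting the large-$n$ competition.

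First I would compute the symmetric payoff. Under $G_j=F$ for every $j$, each agent has $s_j^{*(F)}=t^*$, and a symmetric tie-break in Proposition \ref{prop:fam} splits the favored-agent role uniformly across agents. A direct computation yields each agent's winning probability as $1/n$ when $t^*-c\ge R$ and $(1-F(R+c)^n)/n$ when $t^*-c<R$, both of which are $\Theta(1/n)$ and vanish.

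Next I would bound the deviation payoff of agent $1$ under an arbitrary $G'\in\mathcal{G}_{F}$. By Lemma \ref{lem:sbiggerthant}, $s_1^{*(G')}\ge t^*$. When $s_1^{*(G')}-c>t^*-c$ and $s_1^{*(G')}-c\ge R$, Proposition \ref{prop:fam} makes agent $1$ the unique favored agent with threshold $v^*=s_1^{*(G')}-c$, yielding winning probability
\[ \Pi(G')=G'(s_1^{*(G')})\,F(s_1^{*(G')})^{n-1}+\int_{(s_1^{*(G')},\bar{t}]}F(s)^{n-1}\,dG'(s); \]
the remaining configurations (tied favored status, or non-favored status with a principal-favored mechanism) give weakly smaller probabilities. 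The key lemma is that $G'(\{\bar{t}\})=0$: if $G'(\{\bar{t}\})=q>0$, then $1-G'(s)\ge q$ for $s<\bar{t}$, and the integrated MPC condition $\int_x^{\bar{t}}(1-G'(s))ds\le\int_x^{\bar{t}}(1-F(s))ds$ gives $q(\bar{t}-x)\le\int_x^{\bar{t}}(1-F(s))ds$; dividing by $\bar{t}-x$ and sending $x\uparrow\bar{t}$ forces $q=0$ by continuity of $F$. Since $\mu<\bar{t}$ also gives $s_1^{*(G')}<\bar{t}$, $F(s)^{n-1}\to\mathbf{1}_{\{s=\bar{t}\}}$ pointwise on $[\ubar{t},\bar{t}]$, and the bounded convergence theorem yields $\Pi(G')\to G'(\{\bar{t}\})=0$.

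Combining the two computations, the deviator's payoff vanishes as $n\to\infty$, as does the $\Theta(1/n)$ symmetric payoff, so no deviation is asymptotically profitable. The main obstacle is the discontinuity of the favored-agent rule at the symmetric tie $s_j^*=t^*$: any deviation that strictly increases $s_1^{*(G')}$ promotes agent $1$ from a $1/n$-share of the favored role to the unique favored slot, producing a finite-$n$ gain on the order of $F(t^*)^n\cdot(1-1/n)$. Verifying that this gain also vanishes (it does exponentially, since $F(t^*)<1$) is precisely what the ``as $n\to\infty$'' qualifier captures, and is the step that requires the most delicate treatment.
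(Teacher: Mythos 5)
Your proof is correct and follows essentially the same strategy as the paper's: the decisive observation in both is that a mean-preserving contraction of a continuous $F$ cannot place an atom at $\bar{t}$, so the deviator's winning probability against $n-1$ fully-revealing opponents vanishes as $n\to\infty$. The paper reaches the same conclusion more informally by conditioning on whether $s_i<\bar{t}$ or $s_i=\bar{t}$ and invoking that some opponent exceeds $s_i$ with probability $\to 1$, which glosses over the favored-agent rule (a favored deviator can win without having the highest signal); your explicit decomposition of $\Pi(G')$ through the favored-agent mechanism and the bounded-convergence limit is tighter, and you also state the MPC inequality in the correct direction, whereas the paper's Step 2 writes $\int_{s_i}^{\bar{t}}F\ge\int_{s_i}^{\bar{t}}G_i$ when the MPC constraint actually gives the reverse (the contradiction is salvageable either way). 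One shared loose end worth flagging: both arguments implicitly need $s_1^{*(G')}<\bar{t}$, which requires $\mu+c<\bar{t}$ and does not follow from the paper's standing assumption $R+c\le\bar{t}$ alone when $\mu>R$; your justification ``$\mu<\bar{t}$ gives $s_1^{*(G')}<\bar{t}$'' is too weak since $s^*$ can equal $\mu+c$, and if $\mu+c\ge\bar{t}$ a deviation to $\delta(\mu)$ wins with probability one for every $n$.
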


		\begin{proof}
			We argue that any deviation $G_i(s_i)$  for agent $i$ would not be profitable.  This argument contains two parts:
			\begin{enumerate}[leftmargin= 0.5\parindent]
				\item First, if agent $i$'s realization  $s_i<\bar{t}$, then agent $i$ has zero probability to win this good. Thus in this case, agent $i$ will gain zero profit. To see this, since $\bar{t}$ is on the support of $F$, there exists a positive $\varepsilon>0$, such that for any other agent $j$,
				\begin{align*}
					\text{Prob}\{s_j\vert s_i<s_j\leq \bar{t} \}>\varepsilon.
				\end{align*}
				Therefore,	the probability of  the largest realization of other agents $\max\{x_{-i}\}$ which is strictly greater than $s_i$ is
				\begin{align*}
					\text{Prob}\{s\vert s=\max\{s_{-i}\}>s_i\}>1-(1-\varepsilon)^n,
				\end{align*}
				which converges to 1 as $n\rightarrow\infty$. 
				\item Second, we consider the case that agent $i$ has the chance to win with  positive probability, that is $s_i=\bar{t}$. However, in this case, we claim that the probability that $s_i=\bar{t}$ under $F$ is no less than that under any other distributions $G_i\in\mathcal{G}_F$. Hence, in this case, any deviation is still not  profitable.
				
				This  directly follows from the definition of the mean-preserving contraction. To see this, since $G_i\in \mathcal{G}_F$, for any $s_i\in[\underline{t},\bar{t}]$, 
				\begin{align*}
					\int_{s_i}^{\bar{t}}F(s)\mathrm{d}s\geq \int_{s_i}^{\bar{t}}G_i(s)\mathrm{d}s
				\end{align*}
				Now suppose that $\text{Prob}_F\{s_i=\bar{t} \}<\text{Prob}_{G_i}\{s_i=\bar{t} \}$. Then as $s_i\rightarrow\bar{t}$,
				\begin{align*}
					\int_{s_i}^{\bar{t}}F(s)\mathrm{d}s<\int_{s_i}^{\bar{t}}G_i(s)\mathrm{d}s,
				\end{align*}
				which contradicts to the definition of the mean-preserving contraction. 
			\end{enumerate}
		\end{proof}

	While for finite agent, there may not exist a pure-strategy Nash equilibrium, which is illustrated by the following   Example \ref{ex:eq}.
		\begin{example}
			[Nonexistence of pure-strategy Nash equilibrium]
			\label{ex:eq}
			
			\emph{
				Let $F$ be an uniform distribution on $[0,1]$, $R=1/4$ and $c=1/8$ with two agents.  
				The maximal $s_{max}^*$ is $\mathbb{E}(t)+c=1/2+1/8=5/8$, which can be induced by any $G\in\mathcal{G}_F$ such that $\text{supp } G\subseteq[0,s^*_{max}]$.  Without loss of generality, let $G$ be
				\begin{align*}
					G(s)=	\begin{cases}
						s, &\text{ if } s\in[0,1/4];\\
						1/4, & \text{ if } s\in[1/4,5/8);\\
						1,& \text{ if } s\in[5/8,1].
					\end{cases}
				\end{align*}
			}

			\emph{Suppose $(G,G)$ is a pure-strategy Nash equilibrium. Any deviation should be not profitable.  Now define $\hat{G}$ to be 
				\begin{align*}
					\hat{G}(s)=	\begin{cases}
						s, &\text{ if } s\in[0,1/4+2\varepsilon];\\
						1/4+2\varepsilon, & \text{ if } s\in[1/4+2\varepsilon,5/8+\varepsilon);\\
						1,& \text{ if } s\in[5/8+\varepsilon,1],
					\end{cases}
				\end{align*}
				for a small enough $\varepsilon>0$. This $\hat{G}$ has two features: first, the correspondent $\hat{s}^*$ is strictly smaller than $s^*_{max}$; second, it puts as much as possible mass above $s^*_{max}$. Each agent has the incentive to deviate from $G$ to $\hat{G}$ since
				\begin{align*}
					1-\left(1/4+2\varepsilon\right)>1/2.
				\end{align*}
			}
			
			\emph{
				Then we show the following claim:}
			\begin{cl}
				Any strategy profile $(G,G)$ with $s^*<s^*_{max}$ can not be an Nash equilibrium. 
			\end{cl}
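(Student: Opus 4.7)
The plan is to exhibit, for every symmetric profile $(G,G)$ with $s^* < s^*_{max}$, a unilateral deviation $\hat{G}\in\mathcal{G}_F$ that strictly raises the deviator's winning probability, mirroring the perturbation used in the $\hat{G}$ preceding this claim. The strict slack $s^*_{max} - s^* > 0$ is exactly the room left over in the MPC constraints binding $G$: since $s^*_{max} = \mu + c$ is attained by any $G'\in\mathcal{G}_F$ supported in $[\ubar{t},s^*_{max}]$ (in particular by $\delta(\mu)$), having $s^* < s^*_{max}$ means $G$ has not yet saturated its MPC envelope, so there is feasible room to push mass upward.

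First, I would compute the common baseline payoff $\pi_0$ under $(G,G)$ using Proposition \ref{prop:fam}. Since both agents share the same critical value $s^*$, either (a) $s^* - c \geq R$, in which case the favored-agent role is split between the two agents by symmetric tie-breaking, or (b) $s^* - c < R$, in which case the principal is favored and each agent wins only on the upper-tail event $\{s_i - c \geq R\}$; in both regimes $\pi_0$ is at most half the total agent-allocation probability. Second, I would construct $\hat{G}$ by a small perturbation of $G$ that transfers probability mass from below $s^*$ up toward $s^*_{max}$ and simultaneously deposits a small atom strictly above the opponent's critical value ($s^*_{max}$ in case (a), $R+c$ in case (b)), following the template of the preceding example; the slack $s^* < s^*_{max}$ is what allows this to be done while maintaining the running MPC inequality and mean equality, ensuring $\hat{G}\in\mathcal{G}_F$ with $\hat{s}^* > s^*$. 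Third, I would compare $\hat\pi$ to $\pi_0$: because $\hat{s}^* > s^*$, Proposition \ref{prop:fam} now designates the deviator as the \emph{unique} favored agent, so the deviator wins on $\{s_{-i} < \hat{s}^*\}$, whose $G$-probability is at least $G(s^*)$ and so strictly exceeds the tie-halved contribution $\tfrac12 G(s^*)$ coming from the baseline; the new atom of $\hat{G}$ above the opponent's critical value contributes an additional strictly positive gain on the upper-tail event. Summing both contributions yields $\hat\pi > \pi_0$.

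The main obstacle will be simultaneous feasibility: the two requirements on $\hat G$ — raising $\hat{s}^*$ above $s^*$ and depositing fresh mass above the opponent's critical value — pull the distribution in opposite directions under the MPC constraint, so the perturbation budget must be apportioned carefully, and the running-integral inequality $\int_{\ubar{t}}^{x}\hat{G}(s)\,\mathrm{d}s \le \int_{\ubar{t}}^{x}F(t)\,\mathrm{d}t$ has to be verified pointwise. A secondary care is the case split between regimes (a) and (b), but in the ambient example ($\mu = 1/2 > R = 1/4$, so $s^*_{max} - c > R$) one can always arrange the deviation to land in regime (a), which collapses the case analysis into the template already used for $\hat G$ just above the claim.
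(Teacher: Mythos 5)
Your high-level instinct is right — the slack $s^*_{max} - s^* > 0$ means the MPC envelope is not saturated, so one can perturb $G$ to raise the deviator's threshold and thereby become the \emph{unique} favored agent, capturing the full low-signal event instead of the tie-split half. But the construction you propose is both over-engineered and internally inconsistent, and you do not resolve the tension you yourself flag. The paper's actual deviation $\tilde G$ is a one-move perturbation: it \emph{pools} $G$'s mass on a small interval $[s^*-\varepsilon_1, s^*+\varepsilon_2]$ onto an atom at $s^*$, leaving $G$ unchanged elsewhere. This local mean-preserving contraction of $G$ (hence still an MPC of $F$) shrinks $\int_{\ubar t}^{s^*}\tilde G$, hence strictly raises the critical value to $\tilde s^*>s^*$, and the single-favored-agent payoff now includes the full event $\{s_j<\tilde s^*\}$. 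There is \emph{no} ``deposit an atom above the opponent's critical value'' move; that second prong pulls in the opposite direction (it is a spread rather than a contraction), would tend to lower $\hat s^*$, and is simply not needed — adding it is what creates the feasibility puzzle you describe as ``the main obstacle'' without solving it. Likely you are carrying over the logic of the $\hat G$ deviation constructed just before the claim, but that was a deviation \emph{from} the profile with $s^*=s^*_{max}$, where the opponent's threshold is already maximal and the only way to win is to out-draw them on the upper tail; here, with $s^*<s^*_{max}$, the dominant move is instead to out-bid the opponent's threshold.

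A second, smaller gap is that your gain–loss accounting is one-sided. You tally the gain (winning the full $G(s^*)$-mass of the low-$s_j$ event rather than its tie-halved contribution, a net gain on the order of $\tfrac12 G(s^*)^2$) but you never bound the \emph{cost} of the perturbation — the loss on the upper-tail event incurred by the deviator's own distribution having slightly less mass above $s^*$ after pooling. The paper's terse inequality $G^2(s^*-\varepsilon_1)>\varepsilon$ is exactly this comparison: the gain is bounded below by a quantity independent of the perturbation size, while the cost scales with $\varepsilon_1,\varepsilon_2$, so the deviation is profitable for small enough $\varepsilon$. Without that bound, ``Summing both contributions yields $\hat\pi>\pi_0$'' does not actually follow. (Your observation that the ambient parameters put the problem in regime (a) throughout is correct, and so is the identification of $\pi_0=\tfrac12$ by symmetry as the baseline; the issue is only in the deviation itself.)
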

			\begin{proof}
				Since $s^*<s^*_{max}$ , there must be $s^*$ is strictly within the $\text{supp } G$. Now define $\tilde{G}$ to be
				\begin{align*}
					\tilde{G}(s)=\begin{cases}
						G(s)&\text{ if } s\in[0,s^*-\varepsilon_1];\\
						G(s^*-\varepsilon_1), & \text{ if } s\in[s^*-\varepsilon_1,s^*);\\
						G(s^*+\varepsilon_2)& \text{ if } s\in[s^*,s^*+\varepsilon_2);\\
						G(s),& \text{ if } s\in[s^*+\varepsilon_2,1],
					\end{cases}
				\end{align*}
				where $\varepsilon_1>0$ and $\varepsilon_2>0$ are chosen such that $\tilde{G}\in\mathcal{G}_F$ and $G(s^*)+\varepsilon>G(s^*+\varepsilon_2)>G(s^*)$ for a small enough $\varepsilon>0$. The last condition ensures that $\tilde{s}^*>s^*$.  In this case, each agent has incentives to deviate from $G$ to $\tilde{G}$ since the benefit is much more than the cost.
				\begin{align*}
					G^2(s^*-\varepsilon_1) >\varepsilon.
				\end{align*}
				
			\end{proof}
			
		\end{example}
		\begin{rem*}
			\emph{Even under the case $c>1/4$ that the deviation from $G$ to $\hat{G}$  is non profitable,  we are still hard to say that $(G,G)$ is a pure-strategy Nash equilibrium, since the deviation within the class with $s^*_{max}$ may still be profitable. We leave the characterization of the symmetric pure-strategy (or even mixed-strategy) Nash equilibrium for the future research. }
		\end{rem*}

	\end{appendices}


\end{document}